\numberwithin{equation}{section}
\newtheorem{proposition}{Proposition}[section]
\newtheorem{remark}{Remark}[section]
\theoremstyle{definition}
\DeclareFontFamily{U}{MnSymbolC}{}
\DeclareSymbolFont{MnSyC}{U}{MnSymbolC}{m}{n}
\DeclareFontShape{U}{MnSymbolC}{m}{n}{
    <-6>  MnSymbolC5
   <6-7>  MnSymbolC6
   <7-8>  MnSymbolC7
   <8-9>  MnSymbolC8
   <9-10> MnSymbolC9
  <10-12> MnSymbolC10
  <12->   MnSymbolC12}{}
\DeclareMathSymbol{\intprod}{\mathbin}{MnSyC}{'270}
\newcommand{\wt}[1]{\widetilde{#1}}
\newcommand{\wh}[1]{\widehat{#1}}
\newcommand{\mb}[1]{\mathbf{#1}}
\newcommand{\mc}[1]{\mathcal{#1}}
\newcommand{\bs}[1]{\boldsymbol{#1}}
\newcommand{\mcal}[1]{\mc{#1}}
\newcommand{\scp}[2]{\left<#1\,,\,#2\right>}
\newcommand{\ad}{\operatorname{ad}}
\newcommand{\dn}{\mathrm{d}}
\DeclareMathOperator{\diff}{d\!}
\def\p{{\partial}}
\def\rmd{{\rm d}}
\def\bA{{\mathbf{A}}}
\def\bB{{\mathbf{B}}}
\def\bJ{{\mathbf{J}}}
\def\bu{{\mathbf{u}}}
\def\bv{{\mathbf{v}}}
\def\bx{{\mathbf{x}}}
\def\p{\partial}
\pgfplotsset{compat=1.16}
\def\nti{\tilde{n}}
\def\bu{\mathbf{u}}
\def\bv{\mathbf{v}}
\def\bx{\mathbf{x}}
\begin{document}

\title{\textbf{Deterministic and Stochastic Geometric Mechanics for Hall MHD}}
\author{Darryl D. Holm, Ruiao Hu and Oliver D. Street \\ 
Mathematics, Imperial College London \\ \footnotesize
d.holm@ic.ac.uk, ruiao.hu15@imperial.ac.uk, o.street18@imperial.ac.uk 
\\  \small
Keywords: Geometric mechanics; Stochastic parameterisations; \\ \small
Lie group invariant variational principles; Magnetohydrodynamics; Hall effect
}
\date{}

\maketitle

\begin{abstract}
We derive new models of stochastic Hall magnetohydrodynamics (MHD)
by using a symmetry-reduced stochastic Euler-Poincar\'e variational principle. The new stochastic
Hall MHD theory has potential applications for uncertainty quantification and data 
assimilation in space plasma (space weather) and solar physics. The stochastic geometric
mechanics approach we take here produces coordinate-free results which may then be applied in a variety of 
spatial configurations. 
\end{abstract}

\tableofcontents

\section{Introduction} \label{sec-Intro}
\paragraph{Modelling uncertainty in Hall magnetodynamics (MHD) with Transport Noise}
This paper uses transport noise in a variational formulation to address the uncertainty produced by the instabilities and nonlinear complexity of compressible and ideal Hall MHD. 

Transport Noise is a new concept in stochastic fluid dynamics which has been applied to quantify uncertainty and assimilate data for fluid dynamics \cite{Holm2024}. The mathematical basis of Transport Noise is a stochastic version of Hamilton's variational principle \cite{Holm2015} which has been extended to include geometric rough paths \cite{CHLN2022a,CHLN2022b}. New results in mathematical analysis \cite{Crisan-etal-2023a,Crisan-etal-2023b} and applications to data assimilation \cite{Cotter-etal-2019,Cotter-etal-2020} have followed these mathematical advances, especially for modelling physically important effects in oceanography. For example, new results have been obtained via the variational approach for quantifying uncertainty in models of wave-current interaction focusing on the nonlinear effects of waves accelerating currents in Geophysical Fluid Dynamics (GFD) \cite{HHS2022a,HHS2022b}.

Here, we derive deterministic and stochastic Hall MHD systems by applying stochastic geometric mechanics (SGM). SGM relies on a recently discovered \cite{Holm2015} stochastic variational Hamilton's principle which extends the celebrated Kraichnan model of turbulence based on passive stochastic advection \cite{RHK1991} to further include the effects of Stochastic Advection by Lie Transport (SALT) in the fluid motion equation. This extension, in turn, modifies the deterministic Kelvin-Noether circulation theorem of ideal fluid dynamics \cite{HMR1998}. The sources of the recent development of \textit{Transport Noise} in mathematical statistics are summarised in \cite{Holm2024}. In particular, the SALT approach of \cite{Holm2015} has recently been extended to include geometric rough paths, in \cite{CHLN2022a,CHLN2022b}. In addition, a series of three papers using the SALT approach to derive new GFD models of wave-current interaction has appeared in \cite{Holm2019,HHS2022a,HHS2022b}.

\paragraph{Overview}  The Hall effect in a neutral plasma is a classical problem in plasma physics \cite{Brushlinsky1975}. The Hall effect produces an additional advective drift of the magnetic field lines 
induced by the electron fluid motion. The Hall drift velocity is proportional to the current density, which generates the magnetic force, that in turn generates fluid vorticity leading to MHD turbulence. In particular, it alters
the magnetic field configuration by creating small-scale structures that seed high-wavenumber instabilities and turbulence. 

Geometric mechanics models Hall MHD as a two-fluid ion-electron plasma, \cite{Holm1987,HK1987}. The nonlinear interactions of the slow-fast, large-small, resolved-unresolved decomposition of the mean and the fluctuating or turbulent components of the physical processes in MHD plasmas require a structure-preserving stochastic approach.    
This type of modelling is accomplished by separating the ion-electron plasma dynamics into two maps that whose composition governs the momentum densities the ions and the electrons.

\paragraph{Summary} The paper will use Lie-group invariant variational principles of geometric mechanics to introduce transport noise for the purpose of deriving and investigating the effects of stochastic advection in the interaction of the two-fluid representation of Hall MHD turbulence in neutral ion-electron plasmas.

\paragraph{Plan} This paper takes two approaches in geometric mechanics to derive deterministic and stochastic variational formulations of the PDE system for ideal Hall MHD.  
These two approaches reveal complementary features as we derive the equations for stochastic compressible Hall MHD flow and show that these stochastic equations preserve the Kelvin-Noether conservation laws for circulation and helicity. Section \ref{sec-HamPont} takes the Hamilton-Pontryagin approach, which emphasises the roles of the two flow maps underlying Hall MHD. Section \ref{sec-EulPoinc} takes the Euler-Poincar\'e approach which applies a more direct Lie  algebraic method to derive the equations. Section \ref{sec-SHMHD} contains the derivation of stochastic Hall MHD which introduces independent transport noises into both ion and electron fluids, taking  care to respect the algebraic condition of local charge neutrality. This is our main result. We derive this result using coordinate-free calculus of differential forms to ensure its applicability in any simply connected spatial configuration.  Section \ref{sec: SFLT} discusses an alternative to the SALT approach of introducing stochasticity. The alternative approach is known as Stochastic Forcing by Lie Transport (SFLT). As opposed to SALT, the SFLT approach has the feature that if preserves total integrated energy. Section \ref{sec-remarks} contains remarks, open problems and plans for future work, including further development of the SFLT approach of introducing stochasticity into fluid dynamics.

\paragraph{Kindred papers} Finally, we mention a few recent papers which seem aligned with either the approach or the goals of the present work. On one hand, the work closest to ours in terms of the geometric mechanics approach for two-fluid MHD is \cite{Besse2023}. However, the goals in \cite{Besse2023} concern stochastic Lagrangian formulations of \emph{deterministic} (not stochastic) dissipative PDEs. In contrast to the goals in \cite{Besse2023}, the present work seeks instead to develop variational principles that can help formulate \emph{stochastic} PDEs for parameterising hydromagnetic fluid flows. 

On the other hand, the goals and discussions of the onset of line-element stochasticity in hydromagnetic systems in \cite{Eyink2009} do seem somewhat aligned with the goals of the present work. Beyond discussions of Hall MHD, the paper \cite{Eyink2009} also discusses the two-fluid plasma model known as the Braginsky equations \cite{Brag1965,BragRoberts1995,GV2021}. Braginsky equations with stochastic transport would be well within the range of the stochastic variational principles considered in the present paper.
Another recent paper that seems conceptually related to the present work is \cite{Eyink-etal2013}. However, the relation of the present work to \cite{Eyink-etal2013} is a bit more tenuous for two reasons. In particular, the approach of \cite{Eyink-etal2013} is not variational. Conversely, the present work does not deal with either `anomalous diffusion' or `spontaneous stochasticity'. The latter terms refer to turbulence phenomena, rather than to the stochastic variational principles developed here.

\section{Deterministic Hamilton--Pontryagin principles for Hall MHD}\label{sec-HamPont}

\subsection{Hall MHD for ideal incompressible flow}
After assuming local charge neutrality, the Hall MHD equations for ideal inhomogeneous volume preserving flow 
are given in vector calculus form by \cite{Goedbloed2019}
\begin{align}
    \begin{split}
        &\p_t \rho\bu + \bu\cdot\nabla \rho\bu 
        + \nabla p + \mb{B} \times \operatorname{curl}\mb{B} = 0\,,
        \\
        &\p_t \rho + \mathrm{div}(\rho\bu) = 0\,,\quad\mathrm{div}\bu = 0
        \\
        &\p_t \mb{A} + \mb{B}\times \bv = 0
        \,,\quad \mb{B}=\mathrm{curl}\bA
        \,,\\&
        \bv:= \bu - \frac{R}{a\rho}\operatorname{curl}\mb{B}
        = \bu + \bv_H \,, \quad \text{with} \quad \bv_H:= - \frac{R}{a\rho}\operatorname{curl}\mb{B} = -\frac{R}{a\rho} \bJ
        \,.
    \end{split}
    \label{HMHD-vector-eqns}
\end{align}

Here, $\bu$ is the divergence-free fluid flow velocity, $\rho$ is the local fluid density, $p$ is the fluid  pressure, $\bJ=\mathrm{curl}\bB$ is the magnetic current density, $\bA$ is the magnetic vector potential, with magnetic field $\bB = \mathrm{curl} \bA$. The Hall velocity $\bv_H$ is proportional to the Hall parameter, $R$, determined as the ratio of the Larmor radii of the electrons and ions. The charge-to-mass ratio of electrons and ions is given by $a$ (a negative number). The MHD equations are recovered for $R=0$.  

Boundary conditions for \eqref{HMHD-vector-eqns} require vanishing of normal coordinates of $\bu$, $\nabla p$, $\bB$ and $\bJ$ on the boundary. In the radiation gauge one sets $\p_t\bA\times\mathbf{\hat{n}}|_{\p D}=0$. Hence, if $\bA\times\mathbf{\hat{n}}|_{\p {\cal D}}=0$ holds initially, then it will remain so. Here, $\mathbf{\hat{n}}$ is the normal unit vector on the boundary $\p {\cal D}$ of the domain ${\cal D}$.  See  \cite{Holm1987} for further discussion.

\paragraph{Coordinate-free geometric notation for Hall MHD}
The coordinate-free geometric form of the equation set in \eqref{HMHD-vector-eqns} is 
\begin{align} 
\begin{split}
\left( \p_t + \mathcal{L}_u \right) u^\flat
&= - \frac{1}{\rho}\mb{d} p + \mb{d}\left( \frac{1}{2} u\intprod u^\flat\right) + \frac{1}{\rho}\left((v-u)\intprod \frac{a}{R} \mb{d}A \right)
\,,\\
\left( \p_t + \mathcal{L}_u \right)\rho &= 0
\,,\\
\left( \p_t + \mathcal{L}_{v} \right) A &= \mb{d} (v \intprod A)
\,, \quad\hbox{with}\quad B=\mb{d} A
\,,\\
\mathcal{L}_u (d^3x) &= \mb{d}(u \intprod d^3x) = (\mathrm{div}\bu)d^3x = 0 
\,,\\
\hbox{and}\quad \mathrm{div}\bu &= 0\,, \quad \mathrm{div}\bB = 0 
\,.
\end{split}
\label{HMHD-GMeqns}
\end{align} 

\begin{remark}[Coordinate-free geometric notation]  In the equation system \eqref{HMHD-GMeqns}, the exterior differential operator is denoted as $\mb{d}$ and the musical superscript notation $(\,\cdot\,)^\flat : \mathfrak{X}(\mcal{D}) \rightarrow \Lambda^1(\mcal{D})$ identifies the transformation of vector fields to 1-forms. The inverse transformation from 1-forms to vector fields is written in musical notation as $(\,\cdot\,)^\sharp : \Lambda^1(\mcal{D}) \rightarrow \mathfrak{X}(\mcal{D})$. The notation $v \intprod A$ denotes the insertion of a vector field $v$ into the 1-form $A$. Likewise, the notation $u \intprod u^\flat$ denotes the insertion of a vector field $u$ into the 1-form $u^\flat$. In Euclidean coordinates, this particular insertion would be the dot-product. Of course, also the differential $\mb{d} B= (\mathrm{div}\bB)d^3x=\mb{d}^2 A = 0$ because $B=\mb{d} A$ is an exact 2-form. 
In addition, the operation $\mathcal{L}_u$ is defined by
\begin{align}
    \mathcal{L}_u a:= u \intprod \mb{d}a + \mb{d}(u \intprod a)\,,\quad \forall a \in \Lambda^k(\mcal{D})\,,
\label{LieDer-def}
\end{align} 
which is the Cartan's form of the Lie derivative $\mathcal{L}_u$ along the vector field $u$ on k-forms.
\end{remark}
\begin{remark}[Division by a density]\label{remark:dividing_by_density}
    Suppose that we have a $1$-form density, $\mu = \bs{\mu}\cdot d\bx\otimes dV$. One may formally define division of $1$-form density, $\mu$ by a density $\nti:= *\nti\, dV$ (where $\star\nti\in \Lambda^0(\mcal{D})$ is the Hodge dual of $\nti\in\Lambda^d(\mcal{D})$ in $d$ dimensions) to mean the following,
    \begin{equation*}
    \mu = \bs{\mu}\cdot d\bx\otimes dV = \frac{\bs{\mu}}{\star\nti}\cdot d\bx\otimes \nti 
    \quad\hbox{so that}\quad\frac{\mu}{\nti} := \frac{\bs{\mu}}{\star\nti}\cdot d\bx \,.
    \end{equation*}
\end{remark}
\begin{remark}[Conservation laws]
In the \emph{homogeneous} case when $\rho=const$ equations \eqref{HMHD-GMeqns} imply conservation laws for two circulations and two helicities:
\begin{align} 
\begin{split}
\oint_{c(u)} u^\flat + \frac{a}{R}A \quad\hbox{(Total circulation)}
&\quad\hbox{and}\quad \oint_{c(v)} A  \quad\hbox{(Magnetic circulation)}
\,,\\ 
\int_{\cal D} C\wedge \mb{d} C \quad\hbox{(Total helicity)}
&\quad\hbox{and}\quad  \int_{\cal D} A\wedge \mb{d} A \quad\hbox{(Magnetic helicity)} \,,
\end{split}
\label{HMHD-GMeqns4}
\end{align}
in which $\wedge$ denotes the wedge product of differential forms and the 
variable $C := \mathbf{C}\cdot \mb{d} \bx$ is the \emph{total circulation 1-form} defined by
\begin{align}
C:= u^\flat + \frac{a}{R}A \,.\label{tot-circ-vector}
\end{align} 

In the \emph{inhomogeneous} case when $\rho$ is advected by the fluid velocity vector field $u$ with Euclidean vector components, $\bu$, the equations in \eqref{HMHD-GMeqns} imply conservation laws for only magnetic circulation and magnetic helicity in \eqref{HMHD-GMeqns4}.

In the \emph{inhomogeneous} case with advected $\rho$, the Kelvin-Noether theorem in the  first equation in \eqref{HMHD-GMeqns4} becomes 
\cite{HMR1998}.
\begin{align}
\frac{d}{dt} \oint_{c(u)} C 
= \frac{d}{dt} \oint_{c(u)} \mathbf{C}\cdot \dn \bx
= -\,\oint_{c(u)} \frac{1}{\rho}\mb{d} p
\,,\label{tot-KN-them}
\end{align}\end{remark}
and if $\rho$ is constant in \eqref{tot-KN-them} then the circulation of $C$ is conserved.
Hereafter, to simplify notation later, the musical notation $\flat$ and $\sharp$ will be suppressed when no confusion can arise by doing so.

Most of the considerations in the remainder of the paper will involve the \emph{compressible} ideal Hall MHD flow, and some of the conservation laws in \eqref{HMHD-GMeqns4} will also have their analogues for compressible flow. In fact, the geometric notation and variational methods will transfer easily from the compressible to the incompressible case, and also to the standard MHD case when $R=0$.

\subsection{Deterministic Hamilton--Pontryagin variational principle.}
In the geometric formulation of ideal compressible Hall MHD, the natural configuration space to use is the direct product space $G = \operatorname{Diff}_1(\mcal{D})\otimes \operatorname{Diff}_2(\mcal{D})$. Here, the first diffeomorphism group $\operatorname{Diff}_1(\mcal{D})$ models the fluidic part of the dynamics and the second diffeomorphism group $\operatorname{Diff}_2(\mcal{D})$ models the magnetic part of the motion.
For generalarity, we assume the fluidic part of the dynamics consists of two advected quantities, they are the fluid volume density $\rho \in \operatorname{Den}(\mcal{D})$ and the fluid entropy per mass $s \in \mcal{F}(\mcal{D})$. Similarly, the magnetic part of the dynamics also consists of an advected quantity which is the electron charge density $\wt{n}\in\operatorname{Den}(\mcal{D})$. These advected quantities break the symmetry group of the dynamics from the configuration manifold $G$ to isotropy subgroup $\operatorname{Diff}_{\rho_0, s_0}(\mcal{D})\otimes \operatorname{Diff}_{\wt{n}_0}(\mcal{D})$ where the notations are 
\begin{align*}
    \operatorname{Diff}_{\rho_0, s_0}(\mcal{D}) = \{g\in \operatorname{Diff}_1(\mcal{D})\, | \,\rho_0 g = \rho, s_0 g = s_0 \}\,,\quad \text{and}\quad \operatorname{Diff}_{\wt{n}_0}(\mcal{D}) = \{g\in \operatorname{Diff}_2(\mcal{D}) \,|\, \wt{n}_0 g = \wt{n}_0 \}\,.
\end{align*}
To use a symmetry reduced variational principle to derive the equations of motion for ideal Hall MHD, we use the Hamilton--Pontryagin principle 
\begin{align}\label{eqn:minimally_coupled_action_Pontryagin}
\begin{split}
0 = \delta S & =: \delta \int_{t_0}^{t_1} \underbrace{\scp{\mu}{\p_t\wt\phi\cdot\wt\phi^{-1}} - h(\mu, \wt{n})}_{\hbox{Magneto-}}+ \underbrace{\ell(u, \rho, s)+ \scp{\pi}{\p_t\phi\cdot\phi^{-1} - u}}_{\hbox{hydrodynamics}} + \underbrace{\scp{\frac{a\rho}{\wt{n}}\mu}{\p_t\phi\cdot\phi^{-1}}}_{\hbox{minimal coupling}} 
\\
&\qquad\qquad +\underbrace{\scp{\lambda_{\rho}}{\rho_0\cdot\phi^{-1} - \rho_t} + \scp{\lambda_{s}}{s_0\cdot\phi^{-1} - s_t} + \scp{\lambda_{\wt n}}{{\wt n}_0\cdot\wt\phi^{-1} - {\wt n}_t}}_{\hbox{advection contraints}}\,dt\,,
\end{split}
\end{align}
where all variations are arbitrary and constrained to vanish at the endpoints $t = t_0, t_1$. 
The various symbols appearing in the variational principle \eqref{eqn:minimally_coupled_action_Pontryagin} are defined as follows. The fluid velocity is denoted by $u \in \mathfrak{X}_1(\mcal{D})$ which is naturally related to a path in the diffeomorphism group $\phi_t \in \operatorname{Diff}_1(\mcal{D})$ by $u := \p_t \phi\cdot \phi^{-1}$. There are two advected quantities associated with the fluidic part of the motion which are the volume density $\rho \in \operatorname{Den}(\mcal{D})$ and the entropy per unit mass $s \in \mcal{F}(\mcal{D})$.

The magnetic part of the motion is generated by the momentum one-form density, $\mu \in \mathfrak{X}_2^*(\mcal{D})$, which is the dual variable to the vector field 
\[
v:= \p_t \wt{\phi}\cdot\wt{\phi}^{-1} \in \mathfrak{X}_2(\mcal{D})
\]
where $\wt{\phi}\in \operatorname{Diff}_2(\mcal{D})$. Physically, the momentum density $\mu$ is related to the magnetic vector potential one-form $A \in \Lambda^1(\mcal{D})$ by the electron charge density $\wt{n} \in \operatorname{Den}(\mcal{D})$ by the tensor product $R \mu = A\otimes \wt{n}$ where $R \in \mathbb{R}$ is the Larmor ratio (Hall scaling parameter). Furthermore, the charge density $\wt{n}={\wt n}_0\cdot\wt\phi^{-1}$ is advected by the magnetic field part of the dynamics, as
\[
\p_t\nti = -\,\mathcal{L}_{\p_t \wt{\phi}\,\cdot\,\wt{\phi}^{-1}}\nti
=: -\,\mathcal{L}_{v}\nti
\,.
\]
Thus, in the variational principle \eqref{eqn:minimally_coupled_action_Pontryagin}, we have the reduced Lagrangian $\ell(u,\rho, s)$ together with the Hamilton--Pontryagin constraints for the variables $u, \rho$ and $s$ in the fluidic part of the dynamics; as well as a reduced phase space Lagrangian 
\[
\scp{\mu}{\p_t \wt{\phi}\cdot \wt{\phi}^{-1}} - h(\mu, \wt{n})
\,,\] 
together with the push-forward constraints for $\wt{n}$ in the magnetic part of the motion. To model the interaction between the fluidic and magnetic parts of the motion, we introduced a minimal coupling term in the form $\scp{(a\rho/\wt{n})\mu}{u}$ where the constant $a \in \mathbb{R}$ is the electron/ion charge to mass density ratio and $u=\p_t\phi\cdot\phi^{-1}$.

To derive the equations of motion from the variational principle, we vary the action \eqref{eqn:minimally_coupled_action_Pontryagin} to obtain
\begin{align*}
    0 &= \int dt \scp{\frac{\delta\ell}{\delta u} - \pi}{\delta u} + \scp{\frac{\delta\ell}{\delta \rho} + a\left( \,u\intprod \frac{\mu}{\wt{n}}\right)- \lambda_{\rho}}{\delta\rho} + \scp{\frac{\delta\ell}{\delta s}-\lambda_{s}}{\delta s}
    \\
    &\qquad + \scp{\p_t\wt\phi\cdot\wt\phi^{-1}+ \frac{a\rho}{\wt n}u - \frac{\delta h}{\delta\mu}}{\delta\mu} + \scp{-\frac{\delta h}{\delta\wt n}-\frac{a\rho}{\wt n}\left(u\intprod \frac{\mu}{\wt n}\right)-\lambda_{\wt n}}{\delta\wt n} + \scp{\mu}{\delta\left( \p_t\wt\phi\cdot\wt\phi^{-1} \right)}
    \\
    &\qquad + \scp{\delta\pi}{\p_t\phi\cdot\phi^{-1} - u} + \scp{\pi+ \frac{a\rho}{\wt n}\mu}{\delta(\p_t\phi\cdot\phi^{-1})} + \scp{\delta\lambda_{\rho}}{\rho_0\cdot\phi^{-1} - \rho_t} + \scp{\lambda_{\rho}}{\delta(\rho_0\cdot\phi^{-1})}
    \\
    &\qquad + \scp{\delta\lambda_{s}}{s_0\cdot\phi^{-1} - s_t} + \scp{\lambda_{s}}{\delta(s_0\cdot\phi^{-1})} + \scp{\delta\lambda_{\wt n}}{{\wt n}_0\cdot\wt\phi^{-1} - {\wt n}_t} + \scp{\lambda_{\wt n}}{\delta({\wt n}_0\cdot\wt\phi^{-1})} \,. 
\end{align*}
We make use of the following variations, which are the constrained variations on vector fields and advected quantities which correspond to arbitrary curves in ${\rm Diff}(\mcal{D})$ which vanish at the end-points,
\begin{align}
    \delta\left( \p_t\phi\cdot\phi^{-1} \right) &= \p_t\eta - \ad_u\eta \,,\quad\hbox{and}\quad \delta\left(a_0\phi^{-1}\right) = -\mathcal{L}_{\eta}a_t \,,\quad\hbox{where}\quad \eta=\delta\phi\cdot\phi^{-1} \,,\\
    \delta\left( \p_t\wt\phi\cdot\wt\phi^{-1} \right) &= \p_t\wt\eta - \ad_v\wt\eta \,,\quad\hbox{and}\quad \delta\left(\wt a_0\wt\phi^{-1}\right) = -\mathcal{L}_{\wt\eta}\wt a_t \,,\quad\hbox{where}\quad \wt\eta=\delta\wt\phi\cdot\wt\phi^{-1} \,,
\end{align}
and we define the $\diamond$ operator $\diamond: V \times V^* \rightarrow \mathfrak{X}^*(\mcal{D})$ using duality pairings as follows
\begin{align*}
    \scp{\mathcal{L}_ua}{\lambda}_{V^*\times V} 
    = -\scp{\lambda\diamond a}{u}_{\mathfrak{X}^*(\mcal{D})\times\mathfrak{X}(\mcal{D})}\,\quad \forall \lambda\in V \,,\  a\in V^*\,,\ \hbox{and}\  u \in \mathfrak{X}(\mathcal{D}) \,,
\end{align*}
for all vector spaces $V^*$ where $\operatorname{Diff}(\mcal{D})$ admits a right contragradient representation. In our case, we have $V^* = \mcal{F}(\mcal{D})$ and $ V^* = \operatorname{Den}(\mcal{D})$ where the representation is the pullback.

Thus, Hamilton's Principle gives
\begin{align*}
    0 &= \int dt \scp{\frac{\delta\ell}{\delta u} - \pi}{\delta u} + \scp{\frac{\delta\ell}{\delta \rho} + a\left( \,u\intprod \frac{\mu}{\wt{n}}\right)- \lambda_{\rho}}{\delta\rho} + \scp{\frac{\delta\ell}{\delta s}-\lambda_{s}}{\delta s}
    \\
    &\qquad + \scp{\p_t\wt\phi\cdot\wt\phi^{-1}+ \frac{a\rho}{\wt n}u - \frac{\delta h}{\delta\mu}}{\delta\mu} + \scp{-\frac{\delta h}{\delta\wt n}-\frac{a\rho}{\wt n}\left(u\intprod \frac{\mu}{\wt n}\right)-\lambda_{\wt n}}{\delta\wt n} 
    \\
    &\qquad + \scp{-\p_t\mu - \ad^*_v\mu + \lambda_{\wt n}\diamond{\wt n}}{\wt\eta} + \scp{-(\p_t + \ad^*_u)\left(\pi+ \frac{a\rho}{\wt n}\mu\right) +\lambda_{\rho}\diamond\rho+ \lambda_{s}\diamond s}{\eta}
    \\
    &\qquad + \scp{\delta\pi}{\p_t\phi\cdot\phi^{-1} - u} + \scp{\delta\lambda_{\rho}}{\rho_0\cdot\phi^{-1} - \rho_t} + \scp{\delta\lambda_{s}}{s_0\cdot\phi^{-1} - s_t} + \scp{\delta\lambda_{\wt n}}{{\wt n}_0\cdot\wt\phi^{-1} - {\wt n}_t} \,. 
\end{align*}
Assembling the relationships which follow from the fundamental lemma of the calculus of variations, we have
\begin{align}
    \left( \p_t + \ad^*_u \right)\left( \frac{\delta\ell}{\delta u} + \frac{a\rho}{\wt n}\mu \right) &= \left( \frac{\delta\ell}{\delta \rho} + a\left( \,u\intprod \frac{\mu}{\wt{n}}\right)\right)\diamond\rho + \frac{\delta\ell}{\delta s}\diamond s \label{eq:tot momentum eq}
    \,,\\
    \left( \p_t + \ad^*_v \right)\mu &= \left( -\frac{\delta h}{\delta\wt n}-\frac{a\rho}{\wt n}\left(u\intprod \frac{\mu}{\wt n}\right)\right)\diamond\wt n 
    \,,\quad\hbox{where}\quad 
    v = -\frac{a\rho}{\wt n}u + \frac{\delta h}{\delta\mu} \label{eq:mu momentum eq}
    \,,\\
    \left( \p_t + \mathcal{L}_u \right)\rho &= 0 \label{eq:rho advection eq}
    \,,\\
    \left( \p_t + \mathcal{L}_u \right)s &= 0
    \,,\\
    \left( \p_t + \mathcal{L}_v \right)\wt n &= 0 \label{eq:wt n advection eq}
    \,.
\end{align}

\subsection{Ideal compressible Hall MHD}\label{subsec:compressible_example}
Here, we restrict our discussion to the case of ideal compressible Hall MHD in which the reduced phase space Lagrangian for the magnetic dynamics is given by\footnote{The term $|\mb{d}\frac{\mu}{\wt n}|^2$ is defined as follows: $|\mb{d}\frac{\mu}{\wt n}|^2=(\star\mb{d}\frac{\mu}{\wt n})\wedge (\mb{d}\frac{\mu}{\wt n}) = (\star\bs{\delta}\mb{d}\frac{\mu}{\wt n}) \wedge (\frac{\mu}{\wt n}) = \scp{\frac{\mu}{\star\wt n}}{\left( \bs{\delta}\mb{d}\frac{\mu}{\wt n} \right)^\sharp}_{\mathfrak{X}(\mcal{D})^*\times\mathfrak{X}(\mcal{D})}$.} 
\begin{align}
    \scp{\frac{a\rho}{\wt{n}}\mu}{u} - h(\mu, \wt{n}) = \scp{\frac{a\rho}{\wt{n}}\mu}{u} - \frac{R^2}{2}\scp{\frac{\mu}{\star \wt{n}}}{\left(\bs{\delta} \mb{d}\frac{\mu}{\wt{n}}\right)^\sharp} = \int_M a\rho\left( u\intprod \frac{\mu}{\wt{n}}\right) - \frac{R^2}{2}\left|\mb{d}\frac{\mu}{\wt{n}}\right|^2 \,dV\,. \label{eq:Hall MHD phase space Lag}
\end{align}
The functional derivatives of the Hamiltonian $h(\mu, \wt{n})$ are given by 
\begin{align*}
    \frac{\delta h}{\delta \mu} = \frac{R^2}{\star\wt{n}}\left(\bs{\delta} \mb{d}\frac{\mu}{\wt{n}}\right)^\sharp = \frac{R}{\star\wt{n}}\left(\bs{\delta}\mb{d}A\right)^\sharp := v_H\, \quad \text{and}\quad \frac{\delta h}{\delta \wt{n}} = - \frac{R^2}{\star\wt{n}}\left(\bs{\delta} \mb{d}\frac{\mu}{\wt{n}}\right)^\sharp \intprod \frac{\mu}{\wt{n}} = -\frac{1}{\star\wt{n}}\left(\bs{\delta}\mb{d}A\right)^\sharp\intprod A = - \frac{1}{R}\frac{\delta h}{\delta \mu}\intprod A\,.
\end{align*}
where $\bs{\delta}:= \star \mb{d} \star$ is the usual co-differential induced by the metric $g$, the exterior derivative $\mb{d}$ and the Hodge star operator $\star$.
Upon inserting the expression of the functional derivative $\delta h/\delta \mu$ into the definition of the electron velocity vector field $v$, we have the following dynamics for $\wt{n}$,
\begin{align}
    \p_t \wt{n} = -\mcal{L}_v\wt{n} = \mb{d}(v\intprod\wt{n}) = \mb{d}(\star (a\rho u^\flat)) - R \mb{d}(\star\bs{\delta}\mb{d}A) = \mb{d}(\star (a\rho u^\flat)) =  a\mcal{L}_u \rho\,,
\label{chargeNeutral}
\end{align}
where the fourth equality is due to the basic relation $\mb{d}\star\bs{\delta} = 0$. In combination with the advection equation of $\rho$ by $u$ in \eqref{eq:rho advection eq}, one finds 
\begin{align}
    \p_t (a\rho + \wt{n}) = 0 \,.
    \label{chargeNeutral2}
\end{align}
Thus, according to \eqref{chargeNeutral2}, if the algebraic relation $a\rho_0 + \wt{n}_0 = 0$ for local charge neutrality holds initially, then it holds for all time. This is known as the \emph{local charge neutrality condition}. When the local charge neutrality condition is satisfied, we have the simplified electron velocity 
\begin{align}
    v = u + \frac{R}{\star\wt{n}}\left(\bs{\delta}\mb{d}A\right)^\sharp  = u + v_H\,.
\label{v=v+vh}
\end{align}
Next, equations \eqref{eq:mu momentum eq}, \eqref{eq:wt n advection eq} and the definition of magnetic potential in terms of magnetic momentum density, $\mu$, and electron charge density, $\nti$, given by
\begin{align}
\mu/\nti := A/R
\label{eq:mu-def}
\end{align}
imply Ohm's Law
\begin{align}
    (\p_t + \mathcal{L}_v)\frac{\mu}{\wt n} &= \mb{d}\left(-\frac{\delta h}{\delta\wt n} + v \intprod \frac{\mu}{\wt n} - \frac{\delta h}{\delta \mu}\intprod \frac{\mu}{\wt n} \right) = \mb{d}\left( v \intprod \frac{\mu}{\wt n} \right)
    \,,\\
    \hbox{and hence}\qquad \p_t A &= - \mcal{L}_v A + \mb{d}\left(v \intprod A\right) 
    = -v\, \intprod \mb{d} A = -\,v \intprod B \,, \label{eq:ohm'slaw eq}
\end{align}
where the quantity $B := \mb{d}A \in \Lambda^2(\mcal{D})$ denotes the magnetic field two-form (magnetic flux). Further more, taking the exterior derivative $\mb{d}$ gives the advection of the $B$, 
\begin{align}
    \p_t B + \mcal{L}_v B = 0\,.
\label{B-eqn-v}
\end{align}
Physically, this implies that the $B$-field lines are transported by the total velocity $v$ which is the sum of the fluid velocity $u$ representing the ions and the Hall velocity $v_H$ representing the electron current. A direct calculation reveals the ideal Hall MHD equations preserves the magnetic helicity integral $\int_{\mcal{D}} A\wedge B$.
\begin{remark}[The MHD limit]
    In the limit that $R \to 0$, the Hall velocity $v_H \to 0$. Then, the dynamics of $B$ is the advection by the fluid velocity $u$ only
    \begin{align*}
        \p_t B + \mathcal{L}_u B = 0\,,
    \end{align*}
    and the magnetic field lines become frozen into the fluid flow. Thus, when $R\to 0$, variational equations arising from the Hall MHD action principle reduce to the standard equations for compressible MHD flow. We further note that regardless of the value of $R$, the last term in \eqref{eq:Hall MHD phase space Lag} always represents one-half the square of the magnetic field $B:=\mb{d}A$ as occurs in the Euler-Poincar\'e derivation of the standard MHD equations.
\end{remark}

For the fluidic part of the motion, one has the fluid Lagrangian 
\begin{align*}
    \ell(u, \rho, s) = \int_M \frac{1}{2}\rho|u|^2 - \rho\, e(s,\rho)\,dV\,,
\end{align*}
where $|u|^2 := u \intprod u^\flat$, alongside its functional derivatives
\begin{align*}
    \frac{\delta \ell}{\delta u} = \rho u^\flat \,, \quad \frac{\delta\ell}{\delta \rho} = \frac{1}{2} |u|^2 - \wh{h}(p,s) =:{\cal B} \, \quad \hbox{and}\quad \frac{\delta\ell}{\delta s} = - \rho T \,,
\end{align*}
where the First Law of Thermodynamics for the fluid $\mb{d}e = -\,p\mb{d}\rho^{-1}+ T \mb{d}s$ implies that the enthalpy per unit mass $\wh{h}(p,s)$ satisfies
\begin{align} 
    \mb{d}\wh{h} = \rho^{-1}\mb{d}p + T\mb{d}s \quad\hbox{with}\quad \wh{h} = e + \rho \frac{\p e}{\p \rho}\,,
\label{1stLawTds}
\end{align} 
where $p$ denotes thermodynamic pressure and $T$ denotes temperature. Substituting the functional derivatives of the fluid Lagrangian into the momentum equation \eqref{eq:tot momentum eq} and using equation \eqref{eq:ohm'slaw eq} yields the equation for the total momentum 1-form $u^\flat + aA$  as
\begin{align}
    \left(\p_t + \mcal{L}_u\right) \left(u^\flat + \frac{a}{R}A\right) = \mb{d}\left(u\intprod\left(\frac{1}{2}u^\flat + \frac{a}{R}A\right)\right) - \frac{1}{\rho}\mb{d}p\,.
\end{align}
Notice that, from equation \eqref{eq:ohm'slaw eq}, we have, upon dropping $\flat$ to simplify notation because $A$ is understood to be a 1-form,
\begin{equation*}
    (\p_t + \mathcal{L}_u)A = \mb{d}(v\intprod A) + \mathcal{L}_{u-v}A = \mb{d}(u\intprod A) + (u-v)\intprod \mb{d}A\,,
\end{equation*}
and hence the the equation for the fluid momentum $u^\flat$ is
\begin{align}
\begin{split}
    \left(\p_t + \mcal{L}_u\right)u^\flat &= \frac{1}{2}\mb{d}\left(u\intprod u^\flat\right) - \frac{1}{\star\rho}\mb{d}p + (v-u)\intprod \frac{a}{R}\mb{d}A \\
    &= \frac{1}{2}\mb{d}\left(u\intprod u^\flat\right) - \frac{1}{\star\rho}\left(\mb{d}p + (\bs{\delta}B)^\sharp \intprod B\right)\,,
\end{split}
\end{align}
where in the last equality we have assumed the local charge neutrality condition. Thus, by assuming that the initial conditions $\rho_0$ and $\wt{n}_0$ satisfy $a\rho_0 = -\wt{n}_0$, one simplifies the equation set \eqref{eq:tot momentum eq} -- \eqref{eq:wt n advection eq} to 
\begin{align}
\begin{split}
    &\left(\p_t + \mcal{L}_u\right)u^\flat = \frac{1}{2}\mb{d}\left(u\intprod u^\flat\right) - \frac{1}{\star\rho}\left(\mb{d}p(e,s) + (\bs{\delta}B)^\sharp \intprod B\right)\,,\\
    &\p_t A + u \intprod B - \frac{R}{a(\star\rho)}(\bs{\delta}B)^\sharp \intprod B = 0\,,\\
    &\left( \p_t + \mathcal{L}_u \right)\rho = 0
    \,,\\
    &\left( \p_t + \mathcal{L}_u \right)s = 0 \,, \quad \text{with} \quad a\rho = -\wt{n}\,.
\end{split}
\end{align}
\begin{remark}[Ideal Compressible Hall MHD in 3D]
When working in a three dimensional domain with the Euclidean metric, we have the following definition of the Hall velocity $\bv_H$ expressed in terms of basis functions
\begin{align*}
\begin{split}
v_H = \bv_H\cdot \nabla = \frac{\delta h}{\delta \mu}
= \frac{R}{\star\wt{n}}\left(\bs{\delta}\mb{d}A\right)^\sharp 
=  \frac{R}{\star\wt{n}}\left(\star \mb{d} \star\mb{d}A\right)^\sharp
&= \frac{R}{\star\wt{n}}\left(\left(\mathrm{curl}\mathrm{curl}\bA\right)\cdot d\bx\right)^\sharp
\\
&= \frac{R}{\star\wt{n}}\left(\mathrm{curl}\mathrm{curl}\bs{A}\right)\cdot \nabla
= \frac{R}{\star\wt{n}}\left(\mathrm{curl}\bs{B}\right)\cdot \nabla
\,,
\end{split}
\end{align*}
where $\mb{A}$ and $\mb{B}$ are the basis coefficients of the magnetic potential one-form $A:=\mb{A}\cdot d\mb{x}$ and the magnetic field two-form $B:=\mb{d}A=\mathrm{curl}\mb{A}\cdot \star d\mb{x}$ respectively.
Then, by \eqref{eq:mu momentum eq} we have
\begin{align*}
v = -\frac{a\rho}{\wt n}u + \frac{\delta h}{\delta\mu} 
= \left(\bs{u} - \frac{R}{a\rho}\mathrm{curl}\bs{B}\right)\cdot \nabla
\,.
\end{align*}

Following from the previous specialisation to the Euclidean 3D domain, we may write the ideal compressible Hall MHD equations in vector calculus form as
\begin{align}
    \begin{split}
        &\p_t \bu + \bu\cdot\nabla \bu + \frac{1}{\rho}\left(\nabla p(e,s) + \mb{B} \times \operatorname{curl}\mb{B}\right) = 0\,,\\
        &\p_t \mb{A} + \mb{B}\times 
        \left(\bu - \frac{R}{a\rho}\operatorname{curl}\mb{B}\right) = 0\,,\\
        &\p_t \rho + \nabla\cdot(\bu\rho) = 0\,,\\
        &\p_t s + \bu\cdot\nabla s = 0\,,
    \end{split}
\end{align}
When dissipation by viscosity and resistivity are included in these equations, they become globally well-posed \cite{HanHuLai2024}. 

For isentropic (or isothermal) Hall MHD flow, the term $T\mb{d}s$ vanishes  in \eqref{1stLawTds} and the Hall MHD equations preserve the \emph{Hall helicity} defined as the spatial integral over the domain
\begin{align*}
\Lambda_H=\int C\wedge \mb{d} C = \int \mb{C}\cdot \mathrm{curl}\mb{C} \,d^3x
\quad\hbox{with}\quad 
\mathbf{C} := \mathbf{u} + \frac{a}{R} \mathbf{A}
\end{align*}
as defined in equation \eqref{tot-circ-vector}.
The Hall helicity $\Lambda_H$ is a Casimir functional for the Lie-Poisson bracket.
Conservation of the helicity $\Lambda_H$ implies preservation by isentropic Hall MHD of the linkage number of the field lines of $\mathrm{curl}\mb{C}$. Moreover, a critical point of the sum of the Hamiltonian and the Hall helicity is an equilibrium flow. 

Cases with $\mathrm{curl}\mb{C}=\lambda \mb{C}$ with constant $\lambda\ne 0$ can exhibit Lagrangian chaos as an analogue of the ABC flow of MHD \cite{AK2021}. Vanishing helicity results in a Hall Beltrami flow for  which $\mb{C}\times\mathrm{curl}\mb{C}= \nabla \phi$ and the evolution of $\mb{C}$ takes place on level sets of $\phi$. For the corresponding Euler fluid case, see, e.g., \cite{HolmKim1991}.
\end{remark}

\subsection{Deterministic Hamiltonian structure}\label{subsec:Hamiltonian_deterministic}

Returning to Hamilton's Principle as applied in equation \eqref{eqn:minimally_coupled_action_Pontryagin}, we may Legendre transform the fluid part of the dynamics to obtain a Hamiltonian, $\mathfrak{h}(\pi,\rho,s)$, defined as
\begin{equation}\label{eqn:fluid_Legendre_transform}
    \ell(u,\rho,s) := \scp{\pi}{u} - \mathfrak{h}(\pi,\rho,s) \,.
\end{equation}
Substituting this into equation \eqref{eqn:minimally_coupled_action_Pontryagin}, we have
\begin{align}\label{eqn:minimally_coupled_action_Pontryagin_Hamiltonian1}
\begin{split}
0 = \delta S & =: \delta \int_{t_0}^{t_1} \scp{\mu}{\p_t\wt\phi\cdot\wt\phi^{-1}} - h(\mu, \wt{n})+ \scp{\pi}{\p_t\phi\cdot\phi^{-1}} - \mathfrak{h}(\pi,\rho,s) + \scp{\frac{a\rho}{\wt{n}}\mu}{\p_t\phi\cdot\phi^{-1}} 
\\
&\qquad\qquad +\scp{\lambda_{\rho}}{\rho_0\cdot\phi^{-1} - \rho_t} + \scp{\lambda_{s}}{s_0\cdot\phi^{-1} - s_t} + \scp{\lambda_{\wt n}}{{\wt n}_0\cdot\wt\phi^{-1} - {\wt n}_t}\,dt\,.
\end{split}
\end{align}
This form naturally lends itself to the identification of a new variable, $M:=\pi + \frac{a\rho}{\wt n}\mu$, and a Hamiltonian defined as the sum of the magnetic and fluid Hamiltonians as
\begin{equation}\label{eqn:total_Hamiltonian_definition}
    \mcal{H}(M,\rho,s,\mu,\wt n) := h(\mu,\wt n) + \mathfrak{h}\left(M-\frac{a\rho}{\wt n}\mu , \rho, s \right) \,.
\end{equation}
Hamilton's Principle is therefore
\begin{align}\label{eqn:minimally_coupled_action_Pontryagin_Hamiltonian2}
\begin{split}
0 &= \delta S =: \delta \int_{t_0}^{t_1} \scp{\mu}{\p_t\wt\phi\cdot\wt\phi^{-1}} + \scp{M}{\p_t\phi\cdot\phi^{-1}} - \mcal{H}(M,\rho,s,\mu,\wt n) 
\\
&\qquad\qquad +\scp{\lambda_{\rho}}{\rho_0\cdot\phi^{-1} - \rho_t} + \scp{\lambda_{s}}{s_0\cdot\phi^{-1} - s_t} + \scp{\lambda_{\wt n}}{{\wt n}_0\cdot\wt\phi^{-1} - {\wt n}_t}\,dt
\\
&=\int_{t_0}^{t_1} \scp{\delta\mu}{\p_t\wt\phi\cdot\wt\phi^{-1} - \frac{\delta \mcal{H}}{\delta\mu}} + \scp{\delta M}{\p_t\phi\cdot\phi^{-1} - \frac{\delta \mcal{H}}{\delta M}} + \scp{-(\p_t+\ad^*_{\p_t\wt\phi\cdot\wt\phi^{-1}})\mu + \lambda_{\wt n}\diamond\wt n}{\wt\eta}
\\
&\qquad\qquad +\scp{-(\p_t+\ad^*_{\p_t\phi\cdot\phi^{-1}})M + \lambda_{\rho}\diamond\rho + \lambda_s\diamond s}{\eta} - \scp{\lambda_{\rho} + \frac{\delta \mcal{H}}{\delta \rho}}{\delta\rho} -\scp{\lambda_s + \frac{\delta \mcal{H}}{\delta s}}{\delta s}
\\
&\qquad\qquad -\scp{\lambda_{\wt n} + \frac{\delta \mcal{H}}{\delta \wt n}}{\delta\wt n} +\scp{\delta\lambda_{\rho}}{\rho_0\cdot\phi^{-1} - \rho_t} + \scp{\delta\lambda_{s}}{s_0\cdot\phi^{-1} - s_t} + \scp{\delta\lambda_{\wt n}}{{\wt n}_0\cdot\wt\phi^{-1} - {\wt n}_t}\,dt \,,
\end{split}
\end{align}
The equations corresponding to this have the following Lie-Poisson structure, expressed here in matrix form
\begin{equation}\label{eqn:Hall MHD PB}
    \p_t
    \begin{pmatrix}
        M \\ \rho \\ s \\ \mu \\ \wt n
    \end{pmatrix}
    =
    -
    \begin{pmatrix}
        \ad^*_\Box M & \Box \diamond \rho & \Box \diamond s & 0 & 0
        \\
        \mathcal{L}_\Box\rho & 0 & 0 & 0 & 0
        \\
        \mathcal{L}_\Box s & 0 & 0 & 0 & 0
        \\
        0 & 0 & 0 & \ad^*_\Box \mu & \Box \diamond \wt n
        \\
        0 & 0 & 0 & \mathcal{L}_\Box \wt n & 0
    \end{pmatrix}
    \begin{pmatrix}
        \delta\mcal{H} / \delta M \\ \delta\mcal{H} / \delta \rho \\ \delta\mcal{H} / \delta s \\ \delta\mcal{H} / \delta\mu \\ \delta\mcal{H} / \delta \wt n
    \end{pmatrix}
    \,.
\end{equation}
The restult of this is that the examples considered in this paper have a Lie-Poisson structure, which we illustrate below for three dimensional compressible Hall MHD on a Euclidean domain. For this example, we will use vector calculus notation to better illustrate the problem for readers less familiar with exterior calculus. When division by a density occurs in this section, it is not necessary to take the same precaution as discussed in Remark \ref{remark:dividing_by_density}. Instead, we always mean division by the scalar dual to the density. 

The Hamiltonian corresponding to the model of compressible MHD presented in Section \ref{subsec:compressible_example} is
    \begin{align}\label{eqn:compressible_Hamiltonian}
        {\mcal{H}}(M, \rho, s, \mu, \wt{n}) = \int_{\mcal{D}}\frac{1}{2\rho}\left|\bs{M} - \frac{a\rho}{\wt{n}}\bs{\mu}\right|^2 + \rho e(\rho, s) + \frac{R^2}{2}\left|\operatorname{curl}\frac{\bs{\mu}}{\wt{n}}\right|^2 d^3x\,.
    \end{align}
The Poisson bracket, in vector calculus notation, is
    \begin{align}\label{eqn:LPB-3D-euclidean}
         \p_t \begin{pmatrix}
            M_i \\ \rho \\ s \\ \mu_i \\ \wt{n}
        \end{pmatrix}
        = -
        \begin{pmatrix}
            M_i\p_j + \p_iM_j & \rho \p_i & -s,_j & 0 & 0 \\
            \p_j \rho & 0& 0& 0& 0\\
            s,_j &0 &0 &0 &0 \\
            0 &0 &0 & \mu_i\p_j + \p_i\mu_j& \wt{n}\p_i \\
            0 &0 &0 & \p_j\wt{n} & 0
        \end{pmatrix}
        \begin{pmatrix}
            \delta \mcal{H} / \delta M_j \\
            \delta \mcal{H} / \delta \rho \\
            \delta \mcal{H} / \delta s \\
            \delta \mcal{H} / \delta \mu_j \\
            \delta \mcal{H} / \delta \wt{n} 
        \end{pmatrix}\,,
    \end{align}
which is the Lie-Poisson bracket on the following Lie co-algebra:
    \begin{align}\label{eqn:Hall MHD Lie co algebra}
        \mathfrak{X}^*(\mcal{D}) \ltimes \left(\Lambda^0(\mcal{D})\oplus \Lambda^n(\mcal{D})\right)\oplus \left(\mathfrak{X}^*(\mcal{D})\ltimes \Lambda^n(\mcal{D})\right)\,.
    \end{align}
Varying the Hamiltonian yields,
    \begin{align}\label{eqn:Hamiltonian_variations}
    \begin{split}
        \delta {\mcal{H}} &= \int_{\mcal{D}} \delta \bs{M}\cdot \frac{1}{\rho}\left(\bs{M} - \frac{a\rho}{\wt{n}}\bs{\mu}\right) + \delta \bs{\mu}\cdot \left(\frac{1}{\rho}\left(\bs{M} - \frac{a\rho}{\wt{n}}\bs{\mu}\right)\left(-\frac{a\rho}{\wt{n}}\right) + \frac{1}{\wt{n}}\operatorname{curl}\operatorname{curl}\frac{\bs{\mu}}{\wt{n}}\right) 
        \\
        & \qquad + \delta \wt{n} \left(\frac{1}{\rho}\left(\bs{M} - \frac{a\rho}{\wt{n}}\bs{\mu}\right)\cdot\left(\frac{a\rho}{\wt{n}^2}\bs{\mu}\right) - 
        \left(\operatorname{curl}\operatorname{curl}\frac{\bs{\mu}}{\wt{n}} \right)
        \cdot \frac{\bs{\mu}}{\wt{n}^2} \right) + \delta s\,\rho T 
        \\
        & \qquad + \delta\rho\left( \wh{h}(p,s) - \frac{1}{2\rho^2}\Big| \bs{M} - \frac{a\rho}{\wt n}\bs{\mu}\Big|^2 - \frac{a\bs{\mu}}{\rho\wt n}\Big(\bs{M} - \frac{a\rho}{\wt n}\bs{\mu} \Big) \right) d^3x
    \end{split}
    \end{align}
and assembling the variational derivatives into the above Lie-Poisson equations \eqref{eqn:LPB-3D-euclidean} results in the expected equations of motion also derived in Section \ref{subsec:compressible_example}. In particular, expressed 
    in terms of the vector fields
    \begin{align*}
        \bs{u} := \frac{1}{\rho}\left(\bs{M} - \frac{a\rho\bs{\mu}}{\wt{n}}\right)
        \,,\quad \text{and}\quad \bs{v} := \frac{1}{\wt{n}}\operatorname{curl}\operatorname{curl}\frac{\bs{\mu}}{\wt{n}}\,,
    \end{align*} 
    the following equations arise for the magnetic variables,
    \begin{align}
        \begin{split}
            &\p_t \mu + \ad^*_{-\frac{a\rho}{\wt{n}}u} \mu + \ad^*_{v} \mu +\wt{n}\mb{d}\left(\bs{u}\cdot\frac{a\rho}{\wt{n}}\bs{\mu} - \frac{\bs{\mu}}{\wt{n}^2}\cdot \bs{v}\right) = 0 \,,\\
            &\p_t \wt{n} + \mcal{L}_{-\frac{a\rho}{\wt{n}}u} \wt{n} + \mcal{L}_{v} \wt{n} = 0 \,.
        \end{split}
    \end{align}

\begin{remark}[An alternative Hamiltonian structure]
    Making the change of variables
    \begin{align}
        \left(M, \rho, s , \mu, \wt{n}\right)\longrightarrow \left(M, \rho, s , A, \wt{n}\right) 
    \,,\end{align}
    where $\mu := A\otimes \wt{n} \in \mathfrak{X}^*(\mcal{D})$ is expressed in terms of the magnetic vector potential, $A = \bs{A}\cdot d\bx \in \Lambda^1(\mcal{D})$, and the fluid charge density $\wt{n} d^3x \in \Lambda^n(\mcal{D})$. The relevant Lie co-algebra becomes $\mathfrak{X}^*(\mcal{D}) \ltimes \left(\Lambda^0(\mcal{D})\oplus \Lambda^n(\mcal{D})\right)\otimes \left(\Lambda^1(\mcal{{D}})\oplus \Lambda^n(\mcal{D})\right)$ and the Poisson bracket transforms into
    \begin{align}
        \p_t \begin{pmatrix}
            M_i \\ \rho \\ s \\ A_i \\ \wt{n}
        \end{pmatrix}
        = -
        \begin{pmatrix}
            M_i\p_j + p_iM_j & \rho \p_i & -s,_j & 0 & 0 \\
            \p_j \rho & 0& 0& 0& 0\\
            s,_j &0 &0 &0 &0 \\
            0 &0 &0 &\wt{n}^{-1}\left(A_i, _j - A_j,_i\right) & \p_i \\
            0 &0 &0 & \p_j & 0
        \end{pmatrix}
        \begin{pmatrix}
            \delta \wt{\mcal{H}} / \delta M_j\\
            \delta \wt{\mcal{H}} / \delta \rho \\
            \delta \wt{\mcal{H}} / \delta s \\
            \delta \wt{\mcal{H}} / \delta \mu_j \\
            \delta \wt{\mcal{H}} / \delta \wt{n} 
        \end{pmatrix}\,,
    \label{HMHDsystem}
    \end{align}
    where the Hamiltonian, ${\cal H}$, has been expressed as a Hamiltonian, $\wt{\mcal{H}}$, in the new variables as
    \begin{align}
        \wt{\mcal{H}}(M, \rho, s, A, \wt{n}) = \int_{\mcal{D}}\frac{1}{2\rho}\left|\bs{M} - \frac{a\rho}{R}\bs{A}\right|^2 + \rho e(\rho, s) + \frac{1}{2}\left|\operatorname{curl}\bs{A}\right|^2 d^3x\,.
    \end{align}
    This Hamiltonian formulation of Hall MHD agrees with the existing literature on the topic \cite{Holm1987,HK1987}.
\end{remark}

\subsection{Ideal incompressible Hall MHD}\label{subsec:incompressible_example}

The geometric structures presented thus far in this section have been tailored for compressible Hall MHD. Here, we will show that only a minimal modification is required to consider the inhomogeneous incompressible case. In this case, the collection of quantities which are advected by the velocity $u$ becomes the mass density, $D \in {\rm Den}(\mathcal{D})$, and the potential temperature, $\rho \in \Lambda^0(\mathcal{D})$. As before, we also have the charge density, $\wt n \in {\rm Den}(\mathcal{D})$, which is advected by the velocity $v = u+v_H$. Incompressible Hall MHD can be derived by modifying Hamilton's Principle as it appears in equation \eqref{eqn:minimally_coupled_action_Pontryagin} to reflect the fact that the advected quantities are different and to alter the minimal coupling term to be $\scp{\frac{D\rho a}{\wt n}\mu}{\p_t\phi\cdot\phi^{-1}}$. Note that this is simply a consequence of the fact that the volume measure has changed to be $D\rho$. Once these changes are made, the Hamiltonian structure is qualitatively unchanged. The fluid part of the Hamiltonian is now taken to correspond to inhomogeneous thermal Euler, and the total Hamiltonian for divergence free Hall MHD is now taken to be
\begin{equation}
    {\mcal{H}}_{\rm div}(M, D, \rho, \mu, \wt{n}) = \int_{\mcal{D}}\frac{1}{2D\rho}\left|M - \frac{D\rho a}{\wt{n}}\mu\right|^2 + p(D-1) + \frac{R^2}{2}\left|\mb{d}\frac{\mu}{\wt{n}}\right|^2 dV \,,
\end{equation}
taken together with the Lie-Poisson structure
\begin{equation}\label{eqn:Hall MHD PB incompressible}
    \p_t
    \begin{pmatrix}
        M \\ D \\ \rho \\ \mu \\ \wt n
    \end{pmatrix}
    =
    -
    \begin{pmatrix}
        \ad^*_\Box M & \Box \diamond D & \Box \diamond \rho & 0 & 0
        \\
        \mathcal{L}_\Box\rho & 0 & 0 & 0 & 0
        \\
        \mathcal{L}_\Box s & 0 & 0 & 0 & 0
        \\
        0 & 0 & 0 & \ad^*_\Box \mu & \Box \diamond \wt n
        \\
        0 & 0 & 0 & \mathcal{L}_\Box \wt n & 0
    \end{pmatrix}
    \begin{pmatrix}
        \delta\mcal{H}_{\rm div} / \delta M \\ \delta\mcal{H}_{\rm div} / \delta D \\ \delta\mcal{H}_{\rm div} / \delta \rho \\ \delta\mcal{H}_{\rm div} / \delta\mu \\ \delta\mcal{H}_{\rm div} / \delta \wt n
    \end{pmatrix}
    \,.
\end{equation}
Computing the variational derivatives and substituting these into the Lie-Poisson equations, we get
\begin{align}
    (\p_t + \ad^*_u)M &= -D\mb{d}\left( p - \frac{\rho|u|^2}{2} - a\rho\,u\intprod\frac{\mu}{\wt n} \right) + \left(- \frac{D|u|^2}{2} - Da\,u\intprod \frac{\mu}{\wt n}\right)\mb{d}\rho
    \,,\\
    \hbox{where}\quad  M &= D\rho u^\flat + \frac{D\rho a}{\wt n}\mu
    \,,\\
    (\p_t + \ad^*_v)\mu &= \wt n \mb{d}\left( v \intprod\frac{\mu}{\wt n} \right)
    \,,\\
    \hbox{where}\quad v &= -\frac{D\rho a}{\wt n}u + \frac{R^2}{\star\wt n}\left(\bs{\delta}\mb{d}\frac{\mu}{\wt n} \right)^\sharp = -\frac{D\rho a}{\wt n}u + v_H
    \,,
\end{align}
together with the expected advection equations for $D$, $\rho$, and $\wt n$. Considering that pressure enforces $D=1$ and local charge neutrality holds, $a\rho + \star \wt n = 0$, the equation for $\mu/\wt n$ can be written as
\begin{equation}
\begin{aligned}
    (\p_t + \mcal{L}_u)\frac{\mu}{\wt n} &= -\mathcal{L}_{\frac{R^2}{\star\wt n}\left( \bs{\delta}\mb{d}\frac{\mu}{\wt n} \right)^\sharp}\frac{\mu}{\wt n} - \mb{d}\left( -\frac{R^2}{\star\wt n}\left(\bs{\delta}\mb{d}\frac{\mu}{\wt n} \right)^\sharp \intprod \frac{\mu}{\wt n} + \frac{D\rho a}{\wt n}u\intprod\frac{\mu}{\wt n} \right) 
    \\
    &= \mb{d}\left( u\intprod\frac{\mu}{\wt n} \right) -\frac{R^2}{\star\wt n}\left(\bs{\delta}\mb{d}\frac{\mu}{\wt n} \right)^\sharp \intprod \mb{d}\frac{\mu}{\wt n} \,,
\end{aligned}
\end{equation}
and the equation for the fluid velocity $u^\flat$ is
\begin{equation}
\begin{aligned}
    \left( \p_t + \mcal{L}_u \right)u^\flat &= -\frac{1}{\rho}\mb{d}p + \mb{d}\left( \frac{|u|^2}{2} \right) +\frac{R^2a}{\star\wt n}\left(\bs{\delta}\mb{d}\frac{\mu}{\wt n} \right)^\sharp \intprod \mb{d}\frac{\mu}{\wt n} \,,\quad\hbox{where}\quad \operatorname{div}u=0 \\
    & = -\frac{1}{\rho} \mb{d}p + \mb{d}\left(\frac{|u|^2}{2}\right) - \frac{1}{\rho} (\bs{\delta}\mb{d}A)^\sharp \intprod \mb{d}A\,.
\end{aligned}
\end{equation}
Notice now that the equations derived in this section are precisely equations \eqref{HMHD-GMeqns} introduced at the beginning of this paper.

\section{Stochastic Hamilton-Pontryagin principles for Hall MHD}\label{sec-EulPoinc}
\subsection{Stochastic variational principle}\label{subsec:stochastic_variational_principle}
We will modify the above Hamilton-Pontryagin principle to include stochastic terms following Holm \cite{Holm2015}. The variations will be constructed formally as in \cite{ST2023}, wherein it is demonstrated that the stochastic variational procedure is constructed such that it is rigorously defined and preserves many important properties from geometric mechanics.

The variations are constructed in the following way. Fixing arbitrary vector fields $\eta,\widetilde\eta \in \mathfrak{X}(\mcal{D})$ which are once differentiable in time, we construct group-valued perturbations $e_{\epsilon,t},\widetilde e_{\epsilon,t} \in {\rm Diff}(\mcal{D})$ as solutions to the following equations
\begin{equation}
    \p_t e_{\epsilon,t} = \epsilon(\p_t\eta)\cdot e_{\epsilon,t} \,,\quad\hbox{and}\quad \p_t\widetilde e_{\epsilon,t} = \epsilon(\p_t\widetilde\eta)\cdot \widetilde e_{\epsilon,t}\,,
\end{equation}
where $(\,\cdot\,)$ denotes the lifted right action of group elements on the algebra. These may be used to construct a family of variations of the curves $\phi_t$ and $\widetilde\phi_t$ as
\begin{equation}
    \phi_{\epsilon,t} = e_{\epsilon,t}\cdot\phi_t \,,\quad\hbox{and}\quad \widetilde\phi_{\epsilon,t} = \widetilde e_{\epsilon,t}\cdot\widetilde\phi_t \,.
\end{equation}
As was shown in \cite{ACC2014}, this construction mimics the exponential map since
\begin{equation*}
    \frac{\p}{\p\epsilon}\bigg|_{\epsilon=0}e_{\epsilon,t} = \eta \,,\quad \frac{\p}{\p\epsilon}\bigg|_{\epsilon=0}e_{\epsilon,t}^{-1} = -\eta \,,\quad \frac{\p}{\p\epsilon}\bigg|_{\epsilon=0}\widetilde e_{\epsilon,t} = \widetilde\eta \,,\quad\hbox{and}\quad \frac{\p}{\p\epsilon}\bigg|_{\epsilon=0}\widetilde e_{\epsilon,t}^{-1} = -\widetilde\eta \,,
\end{equation*}
which implies that
\begin{equation*}
    \delta\phi\cdot\phi^{-1} = \left(\frac{\p}{\p\epsilon}\bigg|_{\epsilon=0}\phi_{\epsilon,t}\right)\cdot\phi^{-1} = \eta \,,\quad\hbox{and}\quad \delta\widetilde\phi\cdot\widetilde\phi^{-1} = \widetilde\eta \,.
\end{equation*}
This construction results in the following formulas for the variations
\begin{equation}\label{eqn:stochastic_Lin_constraints}
    \delta(\diff \phi \cdot \phi^{-1}) = \p_t\eta \,\diff t - \ad_{\diff\phi\cdot\phi^{-1}}\eta \,,\quad\hbox{and}\quad \delta(\diff \widetilde\phi \cdot \widetilde\phi^{-1}) = \p_t{\widetilde\eta} \,\diff t - \ad_{\diff\widetilde\phi\cdot\widetilde\phi^{-1}}\widetilde\eta \,,
\end{equation}
which are proven in \cite{ST2023}, where the precise meaning of group actions is described. Furthermore, by using the same construction and an analogous proof, it follows that
\begin{equation}\label{eqn:stochastic_Lin_constraints_advected_quantities}
    \delta\rho = \frac{\p}{\p\epsilon}\bigg|_{\epsilon=0}(\rho_0\cdot\phi_{\epsilon,t}^{-1}) = -\mathcal{L}_{\eta}\rho \,,\quad\hbox{and, similarly,}\quad \delta s = -\mathcal{L}_\eta s \,,\quad \delta\widetilde n = -\mathcal{L}_{\widetilde\eta}\widetilde n \,,
\end{equation}
where $\rho_t:=\rho_0\cdot\phi_{\epsilon,t}^{-1}$ denotes the push-forward of the advected quantity $\rho_0$ by the flow map $\phi_{\epsilon, t}$.

Following \cite{SC2021,ST2023}, it remains to give meaning to the notation $\diff \phi\cdot \phi^{-1}$ and $\diff \widetilde\phi\cdot \widetilde\phi^{-1}$ by describing how they are compatible with a driving semimartingale $(t,W_t^1,W_t^2,\dots ,B_t^1,B_t^2,\dots)$, where $W_t^i$ and $B_t^i$ are two collections of i.i.d. Brownian motions. Here, we will integrate with respect to $W_t^i$ to incorporate noise into the fluid part of the physics, and $B_t^i$ will correspond to the magnetic dynamics. Note that, in practice, it may be convenient to choose $W_t$ and $B_t$ to be the same. However, the distinction is made here to keep track of which terms in the structure correspond to the magnetic physics and which to the fluid. To insert transport-type noise into the fluid part of the motion, the Hamilton-Pontryagin constraint will enforce that 
\[
\diff\phi\cdot\phi^{-1} = u\,dt + \sum_i \xi_i\circ \diff W_t^i\,,
\] 
where $\xi_i$ are given fixed vector fields on $M$. On the Hamiltonian side, this corresponds to a symplectic diffusion with additional stochastic Hamiltonians given by $\scp{m}{\xi_i}$ where $m$ is the momentum of the fluid. It is mathematically necessary to assume that $\diff\widetilde\phi\cdot\widetilde\phi^{-1}$ is compatible with the same semimartingale. Since the system corresponding to the curve of diffeomorphisms $\widetilde\phi_t$ is to be described by a reduced phase space Lagrangian, we will enforce this compatibility by including stochastic Hamiltonians $h_i(\mu,\wt n)$. Finally, the minimal coupling term will be augmented with the stochastic transport terms from the fluid, to ensure that the entire stochastic flow is coupled rather than just the drift part. The stochastic variational principle is then the following,
\begin{align}\label{eqn:minimally_coupled_action_Pontryagin_stochastic}
\begin{split}
0 = \delta S & =: \delta \int \scp{\mu}{\diff\wt\phi\cdot\wt\phi^{-1}} - h(\mu, \wt{n})\,\diff t - \sum_i h_i(\mu,\wt n)\circ\diff B_t^i + \ell(u, \rho, s) \,\diff t
\\
&\qquad\qquad + \scp{\pi}{\diff\phi\cdot\phi^{-1} - u\,\diff t - \sum_i\xi_i\circ \diff W_t^i} + \scp{\frac{a\rho}{\wt{n}}\mu}{\diff\phi\cdot\phi^{-1}}
\\
&\qquad\qquad +\scp{\diff\lambda_{\rho}}{\rho_0\cdot\phi^{-1} - \rho_t} + \scp{\diff\lambda_{s}}{s_0\cdot\phi^{-1} - s_t} + \scp{\diff\lambda_{\wt n}}{{\wt n}_0\cdot\wt\phi^{-1} - {\wt n}_t}\,,
\end{split}
\end{align}
where, following \cite{SC2021}, the Hamilton-Pontryagin constraints for the advected quantities in the final line have been enforced with stochastic Lagrange multipliers of the form $\diff\lambda$, where $\diff$ denotes the fact that $\lambda$ is assumed to be compatible with the semimartingale $(t,W_t^1,W_t^2,\dots ,B_t^1,B_t^2,\dots)$.

Varying the action in an analogous manner to the deterministic case, using the stochastic constrained variations \eqref{eqn:stochastic_Lin_constraints} and \eqref{eqn:stochastic_Lin_constraints_advected_quantities}, we have
\begin{align*}
    0 &= \int \scp{\frac{\delta\ell}{\delta u} - \pi}{\delta u}\,\diff t + \scp{\frac{\delta\ell}{\delta \rho}\,\diff t + a (\diff\phi\cdot\phi^{-1})\intprod \frac{\mu}{\wt{n}} - \diff\lambda_{\rho}}{\delta\rho}
    \\
    &\qquad + \scp{\frac{\delta\ell}{\delta s}\,\diff t-\diff\lambda_{s}}{\delta s} + \scp{\diff\wt\phi\cdot\wt\phi^{-1}- \frac{\delta h}{\delta\mu}\,\diff t - \sum_i\frac{\delta h_i}{\delta\mu}\circ\diff B_t^i + \frac{a\rho}{\wt n}\diff\phi\cdot\phi^{-1}}{\delta\mu} 
    \\
    &\qquad + \scp{-\frac{\delta h}{\delta\wt n}\,\diff t -\sum_i\frac{\delta h_i}{\delta\wt n} \circ\diff B_t^i -\frac{a\rho}{\wt n}(\diff\phi\cdot\phi^{-1})\intprod \frac{\mu}{\wt n} -\diff \lambda_{\wt n}}{\delta\wt n} 
    \\
    &\qquad + \scp{-\left(\diff + \ad^*_{\diff\wt\phi\cdot\wt\phi^{-1}}\right)\mu + \diff\lambda_{\wt n}\diamond{\wt n}}{\wt\eta} + \scp{-\left(\diff + \ad^*_{\diff\phi\cdot\phi^{-1}}\right)\left(\pi + \frac{a\rho}{\wt n}\mu\right) +\diff\lambda_{\rho}\diamond\rho + \diff\lambda_{s}\diamond s}{\eta}
    \\
    &\qquad + \scp{\delta\pi}{\diff\phi\cdot\phi^{-1} - u\,\diff t - \sum_i\xi_i\circ \diff W_t^i} + \scp{\diff\delta\lambda_{\rho}}{\rho_0\cdot\phi^{-1} - \rho_t}
    \\
    &\qquad + \scp{\diff\delta\lambda_{s}}{s_0\cdot\phi^{-1} - s_t} + \scp{\diff\delta\lambda_{\wt n}}{{\wt n}_0\cdot\wt\phi^{-1} - {\wt n}_t} \,. 
\end{align*}
The fundamental lemma of the stochastic calculus of variations \cite{ST2023} implies the following system of SPDEs expressed in their geometric form
\begin{align}
    \left( \diff + \ad^*_{\diff\phi\cdot\phi^{-1}} \right)\left( \frac{\delta\ell}{\delta u} + \frac{a\rho}{\wt n}\mu \right) &= \left( \frac{\delta\ell}{\delta \rho}\,\diff t + a \,(\diff\phi\cdot\phi^{-1})\intprod \frac{\mu}{\wt{n}}\right)\diamond\rho + \frac{\delta\ell}{\delta s}\diamond s\,\diff t
    \,,\\
    \hbox{where}\quad \diff\phi\cdot\phi^{-1} &= u\,\diff t + \sum_i\xi_i\circ \diff W_t^i
    \,,\\
    \left( \diff + \ad^*_{\diff\wt\phi\cdot\wt\phi^{-1}} \right)\mu &= \left( -\frac{\delta h}{\delta\wt n}\,\diff t-\frac{a\rho}{\wt n}(\diff\phi\cdot\phi^{-1})\intprod \frac{\mu}{\wt n}\right)\diamond\wt n - \sum_i \frac{\delta h_i}{\delta\wt n}\diamond\wt n\circ\diff B_t^i
    \,,\\
    \hbox{where}\quad \diff\wt\phi\cdot\wt\phi^{-1} &= \frac{\delta h}{\delta\mu}\,\diff t  + \sum_i\frac{\delta h_i}{\delta\mu}\circ \diff B_t^i -\frac{a\rho}{\wt n}u\,\diff t - \sum_i\frac{a\rho}{\wt n}\xi_i\circ \diff W_t^i \label{eq:salt mu momentum eq}
    \,,\\
    \left( \diff + \mathcal{L}_{\diff\phi\cdot\phi^{-1}} \right)\rho &= 0
    \,,\\
    \left( \diff + \mathcal{L}_{\diff\phi\cdot\phi^{-1}} \right)s &= 0
    \,,\\
    \left( \diff + \mathcal{L}_{\diff\wt\phi\cdot\wt\phi^{-1}} \right)\wt n &= 0 \label{eq:salt wt n advection eq}
    \,.
\end{align}

\begin{remark}
    In the above equations, the notation $\diff\phi\cdot\phi^{-1}$ and $\diff\wt\phi\cdot\wt\phi^{-1}$ has been used to improve readability. When operations, such as Lie derivative, $\ad^*$, or insertion, are applied to these objects, it is meant that the operator acts with respect to the vector fields and the time increment ($\diff t$, $\circ\diff W_t^i$, or $\circ\diff B_t^i$) is not involved in the operation.
\end{remark}

\subsection{Hamiltonian structure of the stochastic equations}\label{subsec:Hamiltonian_stochastic}

We now perform an analogous calculation to the deterministic case, as in Section \ref{subsec:Hamiltonian_deterministic}, to arrive at the stochastic Hamiltonian structure of the equations. Namely, as before, a fluid Hamiltonian is defined as in \eqref{eqn:fluid_Legendre_transform} and we define the Hamiltonian $\mcal{H}$ as in equation \eqref{eqn:total_Hamiltonian_definition}. We consider this together with stochastic Hamiltonians, $\mcal{H}_i$, where each of these objects is defined as
\begin{align}
    \mcal{H}(M,\rho,s,\mu,\wt n) &:= h(\mu,\wt n) + \mathfrak{h}\left(M-\frac{a\rho}{\wt n}\mu , \rho, s \right)
    \,,\\
    \mcal{H}_i(M,\rho,\mu,\wt n) &:= \scp{M-\frac{a\rho}{\wt n}\mu}{\xi_i}
    \,,
\end{align}
and the Hamiltonians $h_i(\mu,\wt n)$ corresponding to the stochastic part of the magnetic dynamics written below remain the same as those in the above Hamilton-Pontryagin principle. The action then becomes
\begin{align}\label{eqn:stochastic_Hamiltonian_action}
\begin{split}
0 = \delta S & =: \delta \int \scp{\mu}{\diff\wt\phi\cdot\wt\phi^{-1}} + \scp{M}{\diff\phi\cdot\phi^{-1}} 
\\
&\qquad\qquad - \mcal{H}(M,\rho,s,\mu,\wt n)\,\diff t -\sum_i\mcal{H}_i(M,\rho,\mu,\wt n)\circ \diff W_t^i - \sum_i h_i(\mu,\wt n)\circ\diff B_t^i
\\
&\qquad\qquad +\scp{\diff\lambda_{\rho}}{\rho_0\cdot\phi^{-1} - \rho_t} + \scp{\diff\lambda_{s}}{s_0\cdot\phi^{-1} - s_t} + \scp{\diff\lambda_{\wt n}}{{\wt n}_0\cdot\wt\phi^{-1} - {\wt n}_t}\,,
\end{split}
\end{align}
Varying as in the deterministic case, with the stochastic variations as defined above, we obtain the following stochastic Hamiltonian structure
\begin{equation}\label{eqn:stochastic_LP_equations}
    \diff
    \begin{pmatrix}
        M \\ \rho \\ s \\ \mu \\ \wt n
    \end{pmatrix}
    =
    -
    \begin{pmatrix}
        \ad^*_\Box M & \Box \diamond \rho & \Box \diamond s & 0 & 0
        \\
        \mathcal{L}_\Box\rho & 0 & 0 & 0 & 0
        \\
        \mathcal{L}_\Box s & 0 & 0 & 0 & 0
        \\
        0 & 0 & 0 & \ad^*_\Box \mu & \Box \diamond \wt n
        \\
        0 & 0 & 0 & \mathcal{L}_\Box \wt n & 0
    \end{pmatrix}
    \begin{pmatrix}
        \delta\mcal{H} / \delta M \,\diff t + \delta\mcal{H}_i / \delta M \circ\diff W_t^i + \delta h_i / \delta M \circ\diff B_t^i \\ \delta\mcal{H} / \delta \rho \,\diff t + \delta\mcal{H}_i / \delta \rho \circ\diff W_t^i + \delta h_i / \delta \rho \circ\diff B_t^i \\  \delta\mcal{H} / \delta s \,\diff t + \delta\mcal{H}_i / \delta s \circ\diff W_t^i + \delta h_i / \delta s \circ\diff B_t^i \\  \delta\mcal{H} / \delta \mu \,\diff t + \delta\mcal{H}_i / \delta \mu \circ\diff W_t^i + \delta h_i / \delta \mu \circ\diff B_t^i \\  \delta\mcal{H} / \delta \wt n \,\diff t + \delta\mcal{H}_i / \delta \wt n \circ\diff W_t^i + \delta h_i / \delta \wt n \circ\diff B_t^i
    \end{pmatrix}
    \,,
\end{equation}
where, in the above equation, a sum over the index $i$ is implied. In the following example, we will demonstrate particular choices of the Hamiltonians.

\subsection{Stochastic compressible Hall MHD}\label{sec-SHMHD}

We take the Hamiltonian defined in equation \eqref{eqn:compressible_Hamiltonian} for compressible Hall MHD, and augment it with Hamiltonians $\mcal{H}_i$ and $h_i$ defined as follows
\begin{align}
    {\mcal{H}}(M, \rho, s, \mu, \wt{n}) &= \int_{\mcal{D}}\frac{1}{2\rho}\left|M - \frac{a\rho}{\wt{n}}\mu\right|^2 + \rho e(\rho, s) + \frac{R^2}{2}\left|\mb{d}\frac{\mu}{\wt{n}}\right|^2 dV
    \label{eqn:stochastic_compressible_Hamiltonians1}
    \,,\\
    \mcal{H}_i(M, \mu, \rho, \wt{n}) &= \scp{M - \frac{a\rho}{\wt{n}}\mu}{\xi_i}_{\mathfrak{X}^*(\mcal{D})\times \mathfrak{X}(\mcal{D})} 
    \label{eqn:stochastic_compressible_Hamiltonians2}
    \,,\\
    h_i(\mu,\wt n) &= \int_{\mcal{D}} (\bs{\delta}\sigma_i)^\sharp \intprod \frac{\mu}{\wt{n}} \,dV 
    \label{eqn:stochastic_compressible_Hamiltonians3}
    \,,
\end{align}
where $\xi\in\mathfrak{X}(\mathcal{D})$ and $\sigma_i \in \Lambda^2(\mathcal{D})$ are the exogenously prescribed noise vector fields and $2$-forms respectively which are chosen to calibrate the system to data or observations. Furthermore, note that by $|\mu|^2$ for $\mu = \bs{\mu}\cdot d\bx\otimes dV\in\mathfrak{X}^*$, for example, we mean $\mu^{\sharp} \intprod (\bs{\mu}\cdot d\bx$).
\begin{remark}[A three dimensional Euclidean domain] When the domain is taken to be a three dimensional Euclidean space, these Hamiltonians become
\begin{align}
    {\mcal{H}}(M, \rho, s, \mu, \wt{n}) &= \int_{\mcal{D}}\frac{1}{2\rho}\left|\bs{M} - \frac{a\rho}{\wt{n}}\bs{\mu}\right|^2 + \rho e(\rho, s) + \frac{R^2}{2}\left|\operatorname{curl}\frac{\bs{\mu}}{\wt{n}}\right|^2 d^3x
    \label{eqn:stochastic_compressible_Hamiltonians1_vector_calculus}
    \,,\\
    \mcal{H}_i(M, \mu, \rho, \wt{n}) &= \int_{\mcal{D}} \left(\bs{M} - \frac{a\rho}{\wt{n}}\bs{\mu}\right)\cdot\bs{\xi}_i\,d^3x
    \label{eqn:stochastic_compressible_Hamiltonians2_vector_calculus}
    \,,\\
    h_i(\mu,\wt n) &= \int_{\mcal{D}} \frac{\bs{\mu}}{\wt{n}}\cdot \operatorname{curl} \bs{\sigma}_i   \,d^3x
    \label{eqn:stochastic_compressible_Hamiltonians3_vector_calculus}
    \,,
\end{align}
and both $\bs{\xi}_i$ and $\bs{\sigma}_i$ can be understood as vectors.
\end{remark}
\begin{remark}[On the choice of the Hamiltonians for the noise]
    The Hamiltonian $\mcal{H}_i$ corresponds to the inclusion of \emph{transport noise} \cite{Holm2015} into the fluid part of the dynamics. To see this, notice that $\mcal{H}_i$, as defined above, is the stochastic Hamiltonian corresponding to transport-type noise \cite{ST2023}, namely $\mcal{H}_i = \scp{\pi}{\xi_i}$ where $\pi$ is the fluid momentum. The Hamiltonian $h_i$ has been chosen such that its variational derivative in $\mu$ has a similar form to the same derivative of the deterministic Hamiltonian corresponding to the magnetic part of the dynamics. Namely, that the operator ${\rm curl}(\cdot) / \wt n$ appears in the expression. As we will see, this ensures that charge neutrality is preserved after the addition of noise to the system.

    Some freedom of choice of the stochastic Hamiltonian $h_i(\mu, \wt{n})$ for the electron degree of freedom still remains, in addition to the stochastic spatial translations of the ion Lagrangian fluid parcels. These other choices of Hamiltonians $h_i(\mu, \wt{n})$ introduce a variety of opportunities for influencing the stochastic dynamics of the electronic degree of freedom. For example, directing microwave radiation into the Hall MHD flow would tend to primarily influence the lighter electrons at first, and then the excited electron degree of freedom would influence the ions. Further investigation of the physics underlying this opportuny may provide a means of selectively controlling Hall MHD flow.
\end{remark}

The variations of the Hamiltonian, $\mcal{H}$, corresponding to the drift part of the equations are calculated as follows
\begin{align*}
    \frac{\delta \mcal{H}}{\delta M} &= \left[\frac{1}{\rho}\left(M - \frac{a\rho}{\wt n}\mu \right)\right]^\sharp = u \,,\qquad \frac{\delta\mcal{H}}{\delta s} = \rho T
    \,,\\
    \frac{\delta\mcal{H}}{\delta \mu} &= -\left[\frac{a}{\wt n}\left(M - \frac{a\rho}{\wt n}\mu \right)\right]^\sharp + \frac{R^2}{\star\wt{n}}\left(\bs{\delta} \mb{d}\frac{\mu}{\wt{n}}\right)^\sharp := - \frac{a\rho}{\wt n}u + v_H =: v
    \,,\\
    \frac{\delta\mcal{H}}{\delta\rho} &= \wh{h}(p,s) - \frac{1}{2\rho^2}\left|\bs{M} - \frac{a\rho}{\wt{n}}\bs{\mu}\right|^2 - \left[\frac{1}{\rho}\left({M} - \frac{a\rho}{\wt{n}}{\mu}\right)\right]^\sharp\intprod\frac{a\mu}{\wt n} = \wh{h}(p,s) - \frac{|u|^2}{2} - a\,\left(u\intprod\frac{\mu}{\wt n}\right)
    \,\\
    \frac{\delta\mcal{H}}{\delta\wt n} &= \left( \frac{1}{\rho}\left( M - \frac{a\rho}{\wt n}\mu  \right) \right)^\sharp \intprod \left( \frac{a\rho}{\wt n^2}\mu \right) - \frac{R^2}{\star\wt{n}}\left(\bs{\delta} \mb{d}\frac{\mu}{\wt{n}}\right)^\sharp \intprod \frac{\mu}{\wt{n}} = \frac{a\rho}{\wt n}\,\left(u\intprod\frac{\mu}{\wt n}\right) + \frac{\delta h}{\delta\wt n}
    \,,
\end{align*}
which are the coordinate free expressions corresponding to the variations in equation \eqref{eqn:Hamiltonian_variations}, computed when illustrating the deterministic case. Varying the stochastic Hamiltonians corresponding to the `diffusion' part of the equations, we have
\begin{align*}
    \frac{\delta\mcal{H}_i}{\delta M} &= \xi_i \,,\quad \frac{\delta\mcal{H}_i}{\delta \mu} = -\frac{a\rho}{\wt n}\xi_i \,,\quad \frac{\delta\mcal{H}_i}{\delta \rho} = -a
    \,\xi_i \intprod \frac{\mu}{\wt n}
    \,,\\
    \frac{\delta\mcal{H}_i}{\delta \wt n} &= \frac{a\rho}{\wt n}\xi_i\intprod\frac{\mu}{\wt n} \,,\quad \frac{\delta h_i}{\delta \mu} = \left( \frac{\bs{\delta}\sigma_i\,dV}{\wt n} \right)^\sharp \,,\quad \frac{\delta h_i}{\delta \wt n} = -  \left(\frac{\bs{\delta}\sigma_i\,dV}{\wt n}\right)^\sharp \intprod \frac{\mu}{\wt{n}}  \,.
\end{align*}

We begin by noticing that varying the Hamiltonians with respect to $M$ and $\mu$ respectively gives
\begin{align}
    \rmd\phi\cdot\phi^{-1} &= \frac{\delta\mcal{H}}{\delta M}\,\diff t + \sum_i \frac{\delta\mcal{H}_i}{\delta M}\circ\diff W_t^i = u\,\diff t + \sum_i \xi_i\circ\diff W_t^i
    \,,\\
    \rmd\wt\phi\cdot\wt\phi^{-1} &= \frac{\delta\mcal{H}}{\delta \mu}\,\diff t + \sum_i \frac{\delta\mcal{H}_i}{\delta \mu}\circ\diff W_t^i + \sum_i\frac{\delta h_i}{\delta\mu}\circ\diff B_t^i 
    \\
    &= \left(-\frac{a\rho}{\wt n}u + \frac{R^2}{\star\wt n}\left( \bs{\delta}\mb{d}\frac{\mu}{\wt n}\right)^\sharp \right)\,\diff t - \frac{a\rho}{\wt n}\sum_i\xi_i\circ \diff W_t^i + \sum_i\left( \frac{\bs{\delta}\sigma_i\,dV}{\wt n} \right)^\sharp \circ\diff B_t^i 
    \,,\\
    &= -\frac{a\rho}{\wt n}\,\diff\phi\cdot\phi^{-1} + v_H\,\diff t + \sum_i\left( \frac{\bs{\delta}\sigma_i\,dV}{\wt n} \right)^\sharp \circ\diff B_t^i \,.
\end{align}
\begin{proposition}
    When the law of charge neutrality holds for the initial conditions $\rho_0$ and $\wt{n}_0$, then it is preserved under the stochastic flow of the system. That is:
    \begin{equation}
        a\rho_0 + \wt{n}_0 = 0 \quad\Longrightarrow\quad a\rho + \wt n = 0 \,,
    \end{equation}
    for all time $t$.
\end{proposition}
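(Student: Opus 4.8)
The plan is to show that the scalar combination $a\rho + \wt n$ has identically vanishing stochastic differential, so that it is frozen at its initial value $a\rho_0 + \wt n_0$ along the flow; the claim then follows immediately. This is the stochastic analogue of the deterministic identity \eqref{chargeNeutral2}, and the computation will be a stochastic version of \eqref{chargeNeutral}. First I would record the ingredients: the advected-quantity equations give $\diff\rho = -\mcal{L}_{\diff\phi\cdot\phi^{-1}}\rho$ and, from \eqref{eq:salt wt n advection eq}, $\diff\wt n = -\mcal{L}_{\diff\wt\phi\cdot\wt\phi^{-1}}\wt n$, where, from the expressions for $\diff\phi\cdot\phi^{-1}$ and $\diff\wt\phi\cdot\wt\phi^{-1}$ computed just above the statement,
\[
\diff\wt\phi\cdot\wt\phi^{-1} = -\frac{a\rho}{\wt n}\,\diff\phi\cdot\phi^{-1} + v_H\,\diff t + \sum_i\left(\frac{\bs{\delta}\sigma_i\,dV}{\wt n}\right)^{\!\sharp}\circ\diff B_t^i\,.
\]
Because $\rho$ and $\wt n$ are top-form densities, Cartan's formula \eqref{LieDer-def} reduces their Lie derivatives to $\mcal{L}_X(\,\cdot\,) = \mb{d}(X\intprod(\,\cdot\,))$, and because the Lie derivative is additive in the vector-field slot (and Stratonovich calculus obeys the ordinary Leibniz/Cartan rules, so no It\^o corrections arise), I can split $\mcal{L}_{\diff\wt\phi\cdot\wt\phi^{-1}}\wt n$ into the three contributions coming from the three terms above and treat them separately.

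The heart of the argument is that the Hall-drift term and every magnetic-noise term annihilate $\wt n$. Inserting $v_H = \frac{R}{\star\wt n}(\bs{\delta}\mb{d}A)^{\sharp}$ into $\wt n = (\star\wt n)\,dV$ and using $\mcal{F}$-linearity of the interior product cancels the density factor, leaving $v_H\intprod\wt n = R\,\star\bs{\delta}\mb{d}A$; identically, $\big(\bs{\delta}\sigma_i\,dV/\wt n\big)^{\sharp}\intprod\wt n = \star\bs{\delta}\sigma_i$ by the division-by-a-density convention of Remark~\ref{remark:dividing_by_density}. Applying $\mb{d}$ and invoking the identity $\mb{d}\star\bs{\delta} = 0$ (equivalently $\star\bs{\delta} = \pm\mb{d}\star$, so $\mb{d}\star\bs{\delta} = \pm\mb{d}^2\star = 0$) then gives $\mcal{L}_{v_H}\wt n = 0$ and $\mcal{L}_{(\bs{\delta}\sigma_i\,dV/\wt n)^{\sharp}}\wt n = 0$, exactly as in the fourth equality of \eqref{chargeNeutral}. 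This step is also where the specific form of the stochastic magnetic Hamiltonian $h_i$ in \eqref{eqn:stochastic_compressible_Hamiltonians3} pays off: the factor $1/\wt n$ inside the noise vector field is precisely what preserves this cancellation after noise is switched on.

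It then remains to handle the surviving contribution $\mcal{L}_{-\frac{a\rho}{\wt n}\diff\phi\cdot\phi^{-1}}\wt n$. Using $\frac{a\rho}{\wt n}\,(X\intprod\wt n) = X\intprod(a\rho)$ for any vector field $X$ --- again Remark~\ref{remark:dividing_by_density} together with $\mcal{F}$-linearity of $\intprod$ --- this equals $-\mb{d}\big((\diff\phi\cdot\phi^{-1})\intprod a\rho\big) = -a\,\mcal{L}_{\diff\phi\cdot\phi^{-1}}\rho$. Collecting the three pieces yields $\diff\wt n = a\,\mcal{L}_{\diff\phi\cdot\phi^{-1}}\rho$, while $a\,\diff\rho = -a\,\mcal{L}_{\diff\phi\cdot\phi^{-1}}\rho$ from the $\rho$-equation; adding, $\diff(a\rho + \wt n) = 0$. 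Hence $a\rho + \wt n$ is almost surely constant in time and equal to $a\rho_0 + \wt n_0$, which gives the result. I expect the only genuine care required is in the bookkeeping of the density-division notation, so that the cancellations above are unambiguous; there is no analytic obstacle, the content being entirely the algebraic identity $\mb{d}\star\bs{\delta} = 0$ together with linearity of the interior product and the Lie derivative.
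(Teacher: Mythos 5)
Your proof is correct and follows essentially the same route as the paper's: both write the stochastic advection laws for $\rho$ and $\wt n$, split the electron transport velocity $\diff\wt\phi\cdot\wt\phi^{-1}$ into the $-\frac{a\rho}{\wt n}\,\diff\phi\cdot\phi^{-1}$ part, the Hall drift, and the magnetic-noise vector fields, kill the latter two contributions to $\diff\wt n$ via $\mb{d}\star\bs{\delta}=0$, and identify the survivor with $-a\,\mathcal{L}_{\diff\phi\cdot\phi^{-1}}\rho$ so that $\diff(a\rho+\wt n)=0$. The only difference is organisational (you bundle the drift and noise into the single semimartingale vector field before splitting, where the paper writes out the Lie--Poisson terms separately), and your explicit justification of $v_H\intprod\wt n=R\,\star\bs{\delta}\mb{d}A$ and of the density-division bookkeeping is, if anything, slightly more careful than the paper's.
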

\begin{proof}
    From the stochastic Lie-Poisson equations \eqref{eqn:stochastic_LP_equations} and the definition of the Hamiltonians \eqref{eqn:stochastic_compressible_Hamiltonians1}-\eqref{eqn:stochastic_compressible_Hamiltonians3}, we may deduce the equations for $\rho$ and $\wt n$ to be
    \begin{align}
        \diff \rho + \mathcal{L}_u \rho \diff t + \sum_i \mathcal{L}_{\xi_i}\rho \circ \diff W_t^i &= 0
        \,,\\
        \diff \wt n + \mathcal{L}_{\frac{a\rho}{\wt n}u}\wt n\,\diff t + \sum_i \mathcal{L}_{\frac{a\rho}{\wt n}\xi_i}\wt n\circ \diff W_t^i + \sum_i \mathcal{L}_{\left( \frac{\bs{\delta}\sigma_i dV}{\wt n} \right)^\sharp}\wt n \circ \diff B_t^i &= 0 \,.
    \end{align}
    Notice that, for a density $D$ and a vector field $X$, we have\footnote{In three dimensional vector calculus notation on Euclidean domains, maintaining the notation $D = \star D\,dV$, this expression translates to $\mathcal{L}_X(\star D\,dV) = \nabla\cdot(\star D\bs{X}) \,dV$.} $\mathcal{L}_X D = \mb{d}(\star {(\star D)} X^\flat)$, where we have employed the identity $D = \star D\,dV$ discussed in Remark \ref{remark:dividing_by_density}. Also exploiting the fact that $a$ is constant, the equation for $\wt n$ is
    \begin{align*}
        \diff\wt n &= a\mb{d}(\star
        {(\star\rho)} u^\flat)\,\diff t + a\sum_i \mb{d}(\star{(\star\rho)} \xi_i^\flat)\circ\diff W_t^i + \sum_i \mb{d}(\ast\bs{\delta}\sigma_i)\circ\diff B_t^i
        \\
        &= a\mathcal{L}_{u}\rho\,\diff t + a\sum_i\mathcal{L}_{\xi_i}\rho\circ\diff W_t^i \,,
    \end{align*}
    the second line follows from the first since $\mb{d}\star\bs{\delta}(\bs{A}\cdot d\bx) = 0$ for all $1$-forms $\bs{A}\cdot d\bx$, which is analogous to the more familiar identity $\operatorname{div}\operatorname{curl}\bs{A} = 0$, for all $3$-vectors $\bs{A}$.
    Combining these equations, we have
    \begin{equation*}
        \diff\,(a\rho + \wt n) = 0 \,,
    \end{equation*}
    which concludes the proof that charge neutrality is preserved by Hall MHD, provided it holds initially.
\end{proof}
 
In the above proof, we have given the equations for $\rho$ and $\wt n$. The remaining equations are as follows.

\begin{align}
    \begin{split}
    \left(\diff + \ad^*_{\diff\phi\cdot\phi^{-1}}\right)M &= -\rho\,\mb{d}\left(\wh{h}(p,s) - \frac{|u|^2}{2} - a\,\left(u\intprod\frac{\mu}{\wt n}\right) \right)\,\diff t + \sum_i \rho\,\mb{d}\left( a\xi_i\intprod\frac{\mu}{\wt n} \right)\circ \diff W_t^i + \rho T\,\mb{d}s\diff t
    \\
    &= -\mb{d}p\,\diff t + \rho\,\mb{d}\left(\frac{|u|^2}{2} + a\,\left(u\intprod\frac{\mu}{\wt n}\right) \right)\,\diff t + \sum_i \rho\,\mb{d}\left( a\xi_i\intprod\frac{\mu}{\wt n} \right)\circ \diff W_t^i
    \,,
    \end{split}
    \\
    \begin{split}
    \left(\diff  + \ad^*_{\diff\wt\phi\cdot\wt\phi^{-1}}\right)\frac{\mu}{\wt n} &= -\mb{d}\left(\frac{a\rho}{\wt n}\,\left(u\intprod\frac{\mu}{\wt n}\right) - \frac{R^2}{\star\wt{n}}\left(\bs{\delta} \mb{d}\frac{\mu}{\wt{n}}\right)^\sharp \intprod \frac{\mu}{\wt{n}} \right)\diff t 
    \\
    &\qquad - \sum_i\mb{d}\left( \frac{a\rho}{\wt n}\xi_i\intprod \frac{\mu}{\wt n} \right)\circ\diff W_t^i +\sum_i \mb{d}\left( \left(\frac{\bs{\delta}\sigma_i\,dV}{\wt n}\right)^\sharp \intprod \frac{\mu}{\wt{n}} \right)\circ\diff B_t^i
    \,,
    \end{split}
    \\
    \left( \diff + \mathcal{L}_{\diff\phi\cdot\phi^{-1}} \right)s &= 0
    \,.
\end{align}
Since $a\rho / \wt n = -1$, we may evaluate the second of these equations as
\begin{equation}
\begin{aligned}
    \left(\diff + \ad^*_{\diff\phi\cdot\phi^{-1}}\right)\frac{\mu}{\wt n} &= - \mathcal{L}_{\frac{R^2}{\wt n}\left( \bs{\delta}\mb{d}\frac{\mu}{\wt n}\right)^\sharp}\frac{\mu}{\wt n}\,\diff t - \sum_i \mathcal{L}_{\left( \frac{\bs{\delta}\sigma_i\,dV}{\wt n} \right)^\sharp} \frac{\mu}{\wt n}\circ\diff B_t^i +\mb{d}\left( u\intprod\frac{\mu}{\wt n} + \frac{R^2}{\star\wt{n}}\left(\bs{\delta} \mb{d}\frac{\mu}{\wt{n}}\right)^\sharp \intprod \frac{\mu}{\wt{n}} \right)\diff t
    \\
    &\qquad + \mb{d}\left( \xi_i\intprod \frac{\mu}{\wt n} \right)\circ\diff W_t^i + \mb{d}\left( \left(\frac{\bs{\delta}\sigma_i\,dV}{\wt n}\right)^\sharp \intprod \frac{\mu}{\wt{n}} \right)\circ\diff B_t^i
    \,,\\
    &= \mb{d} \left( \left( u\,\diff t + \sum_i\xi_i\circ\diff W_t^i \right)\intprod\frac{\mu}{\wt n} \right) - \frac{R^2}{\star\wt{n}}\left(\bs{\delta} \mb{d}\frac{\mu}{\wt{n}}\right)^\sharp \intprod \mb{d}\frac{\mu}{\wt n}\,\diff t 
    \\
    &\qquad - \sum_i\left(\frac{\bs{\delta}\sigma_i\,dV}{\wt n}\right)^\sharp \intprod \mb{d}\frac{\mu}{\wt n}\circ\diff B_t^i \,.
\end{aligned}
\end{equation}
Noting that $M = \rho u^\flat + a\mu/\wt n$, we have the following equation for the fluid velocity $u^\flat$
\begin{equation}
    (\diff + \ad^*_{\diff\phi\cdot\phi^{-1}})u^\flat = -\frac{1}{\rho}\mb{d}p\,\diff t + \mb{d}\frac{|u|^2}{2} + \frac{aR^2}{\star\wt{n}}\left(\bs{\delta} \mb{d}\frac{\mu}{\wt{n}}\right)^\sharp \intprod \mb{d}\frac{\mu}{\wt n}\,\diff t + a\sum_i\left(\frac{\bs{\delta}\sigma_i\,dV}{\wt n}\right)^\sharp \intprod \mb{d}\frac{\mu}{\wt n}\circ\diff B_t^i \,.
\end{equation}

\begin{remark}[The possibility of stochastic pressure terms.]
    When formulating incompressible stochastic fluid models, the inclusion of stochastic integration necessitates semimartingale pressure terms \cite{SC2021}. Indeed, if formulating the incompressible example from Section \ref{subsec:incompressible_example} we would see the need for such terms. In the stochastic compressible equations considered in this section, such pressure terms are not mathematically required. Nonetheless, it is possible to include a full semimartingale pressure in the equation of motion by including terms of the form $\rho e_i(\rho,s)$ into the Hamiltonians $\mcal{H}_i$ and $h_i$, where $e_i$ satisfy equations of state $\mb{d}e_i=-p_i\mb{d}\rho^{-1} + T_i\mb{d}s$ for some $p_i$ and $T_i$.
\end{remark}

\subsection{Energy preserving stochastic perturbations} \label{sec: SFLT}
When introducing stochasticity into the Hamilton--Pontryagin variational principle for the Hall MHD equations, there are no unique choices of stochastic perturbations. As demonstrated in equation \eqref{eqn:minimally_coupled_action_Pontryagin_stochastic}, one particular type of stochasticity takes the form of stochastic Hamiltonians $\scp{\pi}{\xi_i}$ and $h_i(\mu, \wt{n})$. These Hamiltonians generate stochastic transport vector fields and the stochastic perturbations are considered to be the transport noise. 

Another class of stochastic perturbations derived in \cite{HH2021} and known as Stochastic Forcing by Lie Transport (SFLT) introduces stochastic \emph{forces} rather than stochastic \emph{transport}. These stochastic forces are imposed via a stochastic reduced Lagrange d'Alembert variational principle and they designed to preserve the domain integrated energy of the unperturbed dynamics. 

To construct the SFLT stochasticity, we again consider the set of driving semimartingales $\{W^i_t\}_{i=1,\ldots}$ and $\{B^i_t\}_{i=1,\ldots}$ which are two collections of i.i.d Brownian motions. Here, $W^i_t$ are the driving stochastic processes for the fluidic part of the dynamics and $B^i_t$ are the driving stochastic processes for the magnetic part of the dynamics. We define the amplitudes of the stochastic perturbations using quantities taking values in the Lie co-algebra \eqref{eqn:Hall MHD Lie co algebra}. Namely, we have the set of amplitudes $\{f^M_i\}_{i=1,\ldots} \in \mathfrak{X}^*(\mcal{{D}})$, $\{f^\rho_i\}_{i=1,\ldots} \in \Lambda^n(\mcal{{D}})$, $\{f^s_i\}_{i=1,\ldots} \in \Lambda^0(\mcal{{D}})$, $\{f^\mu_i\}_{i=1,\ldots} \in \mathfrak{X}^*(\mcal{{D}})$ and $\{f^{\wt{n}}_i\}_{i=1,\ldots} \in \Lambda^n(\mcal{{D}})$. These quantities define the stochastic perturbations to $M$, $\rho$, $s$, $\mu$ and $\wt{n}$ respectively. 

The variational principle we use to derive the SFLT ideal compressible Hall MHD equations is the stochastic reduced Lagrange-d'Alembert variational principle. As the Hall MHD equations are Lie-Poisson equation on the Lie co-algebra \eqref{eqn:Hall MHD Lie co algebra}, we consider the stochastic reduced Lagrange-d'Alembert principle defined on 
\begin{align*}
    \mathfrak{X}(\mcal{D}) \ltimes \left(\Lambda^0(\mcal{D})\oplus \Lambda^n(\mcal{D})\right)\oplus \left(\mathfrak{X}(\mcal{D})\ltimes \Lambda^0(\mcal{D})\right)\,,
\end{align*}
and its dual algebra \eqref{eqn:Hall MHD Lie co algebra}. Using the Hamiltonian defined in equation \eqref{eqn:total_Hamiltonian_definition}, this variational principle can be written as 
\begin{align}\label{eqn:minimally_coupled_action_stochastic_sflt}
\begin{split}
0 = \delta S & =: \delta \int \scp{\mu}{\diff\wt\phi\cdot\wt\phi^{-1}} + \scp{\wt{n}}{\diff v_{\wt{n}}\cdot\wt\phi^{-1}} + \scp{M}{\diff\phi\cdot\phi^{-1}} + \scp{\rho}{\diff v_\rho \cdot\phi^{-1}} + \scp{s}{\diff v_s\cdot\phi^{-1}}\\
&\qquad \qquad - \mcal{H}(M,\rho,s,\mu,\wt n)\,\diff t \\
& \qquad + \int \scp{\ad^*_{(\frac{\delta \mcal{H}}{\delta M},\frac{\delta \mcal{H}}{\delta \rho},\frac{\delta \mcal{H}}{\delta s})}(f^M_i, f^\rho_i, f^s_i)}{(\eta, \eta_\rho, \eta_s)}\circ \diff W^i_t + \scp{\ad^*_{(\frac{\delta \mcal{H}}{\delta \mu},\frac{\delta \mcal{H}}{\delta \wt{n}})}(f^\mu_i, f^{\wt{n}}_i)}{(\wt{\eta}, \eta_{\wt{n}})}\circ \diff B^i_t
\end{split}
\end{align}
where the summation over the index $i$ between the force amplitudes $f^{(\cdot)}_i$ and the stochastic processes are assumed. 
In the principle \eqref{eqn:minimally_coupled_action_stochastic_sflt}, we make use of additional group valued quantities $v_{\rho}$, $v_s$ and $v_{\wt{n}}$ which are elements of the semi-direct product groups
\begin{align*}
    (\phi, v_\rho, v_s) \in \operatorname{Diff}_1(\mcal{D})\ltimes(\Lambda^0(\mcal{D})\times \Lambda^n(\mcal{D})) = \mathfrak{s}_1\,,\quad (\wt{\phi}, v_{\wt{n}}) \in \operatorname{Diff}_2(\mcal{D})\ltimes\Lambda^0(\mcal{D}) = \mathfrak{s}_2\,.
\end{align*}
Extending the construction in \cite{ST2023} for the variations of $\diff \phi\cdot\phi^{-1} \in \mathfrak{X}(\mcal{D})$ to the semi-direct product case, we have the analogous variations of $\diff v_\rho \cdot \phi^{-1}$, $\diff v_s \cdot \phi^{-1}$ and $\diff v_{\wt{n}} \cdot \wt{\phi}^{-1}$. In particular, for an arbitrarily chosen time-differentiable $(\eta,\eta_\rho,\eta_s) = (\delta\phi,\delta v_\rho, \delta v_s)\cdot\phi^{-1} = (\delta \phi \cdot \phi^{-1},\delta v_\rho \cdot \phi^{-1},\delta v_s \cdot \phi^{-1})$, and $(\wt\eta,\eta_{\wt n}) = (\delta\wt\phi,\delta v_{\wt n})\cdot\wt \phi^{-1} = (\delta \wt{\phi} \cdot \wt{\phi}^{-1},\delta v_{\wt{n}} \cdot \wt{\phi}^{-1})$ we have
\begin{align}
\begin{split}
    \delta(\diff \phi \cdot \phi^{-1}, \diff v_\rho \cdot \phi^{-1}, \diff v_s \cdot \phi^{-1}) 
    &= \p_t \, (\eta, \eta_\rho, \eta_s)\,\diff t - \ad_{(\diff \phi \cdot \phi^{-1}, \diff v_\rho \cdot \phi^{-1}, \diff v_s \cdot \phi^{-1})}(\eta, \eta_\rho, \eta_s)\\ 
    &\hspace{-130pt}= (\p_t \eta\,\diff t - \ad_{\diff\phi\cdot\phi^{-1}}\eta, \p_t \eta_\rho\,\diff t - \mcal{L}_{\diff \phi\cdot \phi^{-1}}\eta_\rho - \mcal{L}_{\eta}\diff v_\rho \cdot \phi^{-1}, \p_t \eta_s\,\diff t - \mcal{L}_{\diff \phi\cdot \phi^{-1}}\eta_s - \mcal{L}_{\eta}\diff v_s \cdot \phi^{-1})\,,
\end{split}
\\
\begin{split}
    \delta(\diff \widetilde\phi \cdot \widetilde\phi^{-1}, v_{\wt{n}}\cdot \wt{\phi}^{-1}) &=\p_t\, (\wt{\eta}, \eta_{\wt{n}})\,\diff t - \ad_{(\diff \wt{\phi} \cdot \wt{\phi}^{-1}, \diff v_{\wt{n}} \cdot \wt{\phi}^{-1})}(\wt{\eta}, \eta_{\wt{n}})\\
    &= (\p_t \eta\,\diff t - \ad_{\diff \wt{\phi}\cdot\wt{\phi}^{-1}} \eta, \p_t \eta_{\wt{n}}\,\diff t - \mcal{L}_{\diff \wt{\phi}\cdot \wt{\phi}^{-1}}\eta_{\wt{n}} - \mcal{L}_{\wt{\eta}}\diff v_{\wt{n}} \cdot \wt{\phi}^{-1})
    \,.
\end{split}
\end{align}
As in Section \ref{subsec:stochastic_variational_principle} and following \cite{ST2023}, the Lie algebra element $(\eta,\eta_\rho,\eta_s,\wt\eta,\eta_{\wt n})$ is chosen arbitrarily and the variations in the group are constructed in terms of this. These constrained variations form the extension to the variations of $\diff \phi\cdot\phi^{-1}$ and $\diff \wt{\phi}\cdot \wt{\phi}^{-1}$ given in equation \eqref{eqn:stochastic_Lin_constraints}. Taking the variations and applying the stochastic fundamental theorem of calculus we obtain the following system of SPDES expressed in their geometric form
\begin{align}
\begin{split}
    \left( \diff + \ad^*_{\diff\phi\cdot\phi^{-1}} \right)M &= -\frac{\delta \mcal{H}}{\delta \rho}\diamond\rho \,\diff t - \frac{\delta \mathcal{H}}{\delta s}\diamond s\,\diff t - \ad^*_u f^M_i \circ \diff W^i_t - \frac{\delta \mcal{H}}{\delta \rho}\diamond f^\rho_i \,\diff W^i_t -\frac{\delta \mcal{H}}{\delta s}\diamond f^s_i \circ \diff W^i_t\,, \\
    &\text{where}\quad \diff\phi\cdot\phi^{-1} = \frac{\delta \mcal{H}}{\delta M}\,\diff t = u\,\diff t \,,\\
    \left( \diff + \ad^*_{\diff\wt\phi\cdot\wt\phi^{-1}} \right)\mu &= -\frac{\delta \mcal{H}}{\delta \wt{n}}\diamond \wt{n}\,\diff t - \ad^*_v f^\mu_i \circ \diff B^i_t - \frac{\delta \mcal{H}}{\delta \wt{n}}\diamond f^{\wt{n}}_i \circ \diff B^i_t
    \,,\\
    &\text{where}\quad \diff\wt\phi\cdot\wt\phi^{-1} = \frac{\delta \mcal{H}}{\delta\mu}\,\diff t = v\,\diff t
    \,,\\
    \left( \diff + \mathcal{L}_{\diff\phi\cdot\phi^{-1}} \right)\rho &= -\mcal{L}_u f^\rho_i \circ \diff W^i_t
    \,,\\
    \left( \diff + \mathcal{L}_{\diff\phi\cdot\phi^{-1}} \right)s &= -\mcal{L}_u f^s_i \circ \diff W^i_t
    \,,\\
    \left( \diff + \mathcal{L}_{\diff\wt\phi\cdot\wt\phi^{-1}} \right)\wt n &= -\mcal{L}_v f^{\wt{n}}_i \circ \diff B^i_t 
    \,.
\end{split}\label{eq:sflt general lp eq}
\end{align}
As indicated in \cite{HH2021}, the SFLT stochastic perturbations can be organised into a Hamiltonian structure. In the case of the Lie-Poisson equation \eqref{eq:sflt general lp eq}, we have the following 
\begin{align}
    \begin{split}
        \diff
    \begin{pmatrix}
        M \\ \rho \\ s \\ \mu \\ \wt n
    \end{pmatrix}
    &=
    -
    \begin{pmatrix}
        \ad^*_\Box M & \Box \diamond \rho & \Box \diamond s & 0 & 0
        \\
        \mathcal{L}_\Box\rho & 0 & 0 & 0 & 0
        \\
        \mathcal{L}_\Box s & 0 & 0 & 0 & 0
        \\
        0 & 0 & 0 & \ad^*_\Box \mu & \Box \diamond \wt n
        \\
        0 & 0 & 0 & \mathcal{L}_\Box \wt n & 0
    \end{pmatrix}
    \begin{pmatrix}
            \delta \mcal{H} / \delta M \\
            \delta \mcal{H} / \delta \rho \\
            \delta \mcal{H} / \delta s \\
            \delta \mcal{H} / \delta \mu \\
            \delta \mcal{H} / \delta \wt{n} 
    \end{pmatrix}
    \diff t
    \\
    &-
    \begin{pmatrix}
        \ad^*_\Box f^M_i\circ \diff W^i_t & \Box \diamond f^\rho_i\circ \diff W^i_t & \Box \diamond f^s_i\circ \diff W^i_t & 0 & 0
        \\
        \mathcal{L}_\Box f^\rho_i\circ \diff W^i_t & 0 & 0 & 0 & 0
        \\
        \mathcal{L}_\Box f^s_i\circ \diff W^i_t & 0 & 0 & 0 & 0
        \\
        0 & 0 & 0 & \ad^*_\Box f^\mu_i\circ \diff B^i_t & \Box \diamond f^{\wt{n}}_i\circ \diff B^i_t
        \\
        0 & 0 & 0 & \mathcal{L}_\Box f^{\wt{n}}_i\circ \diff B^i_t & 0
    \end{pmatrix}
    \begin{pmatrix}
            \delta \mcal{H} / \delta M \\
            \delta \mcal{H} / \delta \rho \\
            \delta \mcal{H} / \delta s \\
            \delta \mcal{H} / \delta \mu \\
            \delta \mcal{H} / \delta \wt{n} 
    \end{pmatrix}
    \,,
    \end{split}
\end{align}
Compared to the deterministic Hamiltonian structure in equation \eqref{eqn:Hall MHD PB}, the amplitudes of the SFLT stochastic perturbations form a frozen Lie-Poisson bracket on the Lie co-algebra \eqref{eqn:Hall MHD Lie co algebra} which is then contracted with Brownian processes.

For applications to fluid flows, the volume density perturbations $f^\rho_i$ are typically set to zero to preserve the advection property of volume density due to its relation to the determinant of the fluid back-to-labels map. For the preservation of the charge neutrality condition \eqref{chargeNeutral2}, we additionally set the charge density perturbations $f^{\wt{n}}_i$ to vanish. Under these conditions, by inserting the Hamiltonian for the ideal compressible Hall MHD equations given in \eqref{eqn:total_Hamiltonian_definition}, we obtain the stochastically perturbed Ohm's law for the dynamics of the magnetic potential $A$
\begin{align}
    \diff A + \mcal{L}_v A\,\diff t + \frac{R}{\wt{n}}\mcal{L}_v f^\mu_i \circ \diff B^i_t = \mb{d}(v\intprod A)\,\diff t\,.
\end{align}
Following similar calculations as in the deterministic case, the full SFLT Hall MHD equations can be written exclusively in the variables $u, \rho, s, A$ and $B$ to have
\begin{align}
    \begin{split}
    &\diff u^\flat + \mcal{L}_u u^\flat = \frac{1}{2}\mb{d}\left(u\intprod u^\flat\right)\,\diff t - \frac{1}{\star\rho}\left(\mb{d}p(e,s) + (\bs{\delta}\mb{d}A)^\sharp \intprod \mb{d}A\right)\,\diff t\\
    &\qquad \qquad \qquad \qquad \qquad - \frac{1}{\rho}\mcal{L}_u f^M_i\circ \diff W^i_t - \frac{1}{\rho}\mcal{L}_v f^\mu_i\circ \diff B^i_t + T\mb{d}f^s_i \circ \diff W^i_t\,,\\
    & \diff A + \mcal{L}_v A\,\diff t - \frac{R}{a\rho}\mcal{L}_v f^\mu_i \circ \diff B^i_t = \mb{d}(v\intprod A)\,\diff t \,,\\
    &\diff s + \mathcal{L}_u s\,\diff t + \mcal{L}_u f^s_i \circ \diff W^i_t = 0 \,, \\
    &\diff\rho + \mathcal{L}_u \rho\,\diff t  = 0
    \,, \quad \text{with} \quad v = u - \frac{R}{a(\star\rho)}(\bs{\delta}\mb{d}A)^\sharp\,.
\end{split}
\end{align}
We intend to bring the SFLT approach forward together with the SALT approach for the purpose of developing new stochastic forces for Hall MHD. The opportunities introduced by the SFLT approach deserve more investigation.

\section{Open problems and plans for future work}\label{sec-remarks}

In this paper the coordinate-free Eulerian equations of Hall MHD have been derived by using two different  methods SALT and SFLt that each preserve the relabelling symmetry of fluid dynamics induced in the passage from the Lagrangian, to the Eulerian representation. Taking this step enabled the SALT approach to be applied to introduce a type of stochastic transport that preserves the coadjoint Lie structure of the deterministic Hall MHD equations proved in the first step. Taking this step enabled the SFLT approach to deal with stochasti processes in Hall MHD which would preserve its total energy. 

By taking Lagrangian expectations of SALT, one creates LA SALT \cite{DHL2020,D+H2020} . The expected solutions of LA SALT satisfy deterministic equations, while the fluctuations satisfy linear equations. As a result, the LA SALT approach provides deterministic equations for their variances and higher moments. These deterministic equations for the statistics of the expected solutions in turn provide deterministic predictions of how the `climate' is changing for the stochastic SALT equations, in Ed Lorenz's sense that: ``Climate is what you expect and Weather is what you get."

In other future research, we plan to use LA SALT (Lagrangian-Averaged Stochastic Advection by Lie Transport) \cite{D+H2020, DHL2020} to derive the expected equations and the dynamical equations for the variances of the fluctuations of the solutions away from its expected solution. After establishing the LA SALT equations for Hall MHD we plan to introduce double-bracket dissipation in order to characterise the invariant measures induced by this form of dissipation. The LA SALT approach preserves the coadjoint Lie structure of the deterministic, the stochastic (SALT) equations. Moreover, the combination of the expectation/fluctuation decomposition of LA SALT and the application of double bracket dissipation preserves the coadjoint structure of the approach to an invariant measure \cite{ACH2018,DHP2023}.  

Both Lagrangian averaging (LA) in SALT and Eulerian averaging (EA) in SFLT are possible. These complementary  stochastic approaches -- Langrangian transport and Eulerian forcing -- may lead to a variety of opportunities to develop complementary \emph{expectation dynamics} for fluids and and fluid plasmas.  

We also plan to address the open problem of deriving modifications such as Levy area when introducing non-Brownian stochasticity such as coloured noise, fractional diffusion and/or geometric rough paths, following the pioneering work of \cite{CrisanHolmLeahyNilssen2022}. This more general setting for transport noise may potentially reveal more details about the invariant measures for this class of stochastic fluid dynamics, \cite{ACH2018,DHP2023}. 

In particular, the following outstanding question and its counterpart remain: 
\begin{enumerate}
    \item 
After the SALT Kelvin-Noether theorem for Brownian transport noise has been derived for a certain fluid model, does the addition of selective decay (e.g., double bracket dissipation) guarantee Gibbs invariant measure? If so, the solution behaviour would be determined and expectations could be taken in the context of standard statistical physics. This fundamental mathematical question is likely be the most difficult among the questions we wish to answer.
    \item
If Gibbs invariant measure is not guaranteed, then under what conditions for Brownian transport noise with SALT in an Euler-Poincare fluid model does the addition of selective decay (or double bracket dissipation) lead to Gibbs invariant measure?
\end{enumerate}

The current work could also allow consideration of the following applications:

$\bullet\quad$ Stochastic variational principles with additional forces for control in fluid dynamics
\cite{Troutman2012}.

$\bullet\quad$ Other effects of dissipation, e.g., by selective decay (double-bracket dissipation)
\cite{FGB_H2014}.

$\bullet\quad$ Uncertainty quantification in space weather plasmas, \cite{Gombosi_SW_2021}.

$\bullet\quad$ Other space plasmas, including solar flares, solar winds and the Earth's magnetic field  
\cite{Huba2023}.

All of these opportunities await exploration via the complementary geometric mechanics approaches of stochastic transport (SALT) and stochastic forcing (SFLT). 

\subsection*{Acknowledgements} 
We are grateful to C. Cotter, D. Crisan, H. Dumpty and M. Yan for several thoughtful suggestions during the course of this work which have improved or clarified the interpretation of its results. DH and RH were partially supported during the present work by Office of Naval Research (ONR) grant award N00014-22-1-2082, Stochastic Parameterization of Ocean Turbulence for Observational Networks. DH and OS were partially supported during the present work by European Research Council (ERC) Synergy grant Stochastic Transport in Upper Ocean Dynamics (STUOD) -- DLV-856408.


\begin{thebibliography}{99}


\bibitem{ACC2014}
Arnaudon, M., Chen, X., and Cruzeiro, A. B., 2014. Stochastic Euler-Poincar\'e reduction. J. Math. Phys. 1 August 2014; 55 (8): 081507. \url{https://doi.org/10.1063/1.4893357}

\bibitem{ACH2018}
Arnaudon, A., L de Castro, A., Holm, D.D., 2018.
Noise and Dissipation on Coadjoint Orbits.
J. Nonlin. Sci.  28:91--145.     
\url{https://doi.org/10.1007/s00332-017-9404-3}

\bibitem{AK2021}
Arnold, V.I. and Khesin, B.A., 2021. Topological Methods in Hydrodynamics. \url{https://doi.org/10.1007/978-3-030-74278-2}

\bibitem{Besse2023}
Besse, N., 2023. Stochastic Lagrangian perturbation of Lie transport and applications to fluids. Nonlinear Analysis, 232, p.113249. \url{https://doi.org/10.1016/j.na.2023.113249}

\bibitem{Brag1965}
Braginsky, S.I., 1965. in: M.A. Leontovich (ed.), Reviews of Plasma Physics, 1, Consultants Bureau,
New York, p. 205.

\bibitem{BragRoberts1995}
Braginsky, S.I. and Roberts, P.H., 1995.
Equations governing convection in earth's core and the geodynamo.
Geophys. \& Astrophys. Fluid Dyn., 79:1-4, pp. 1-97. \\
\url{https://doi.org/10.1080/03091929508228992}


\bibitem{Brushlinsky1975}
Brushlinsky, K.V. 1975. 
Numerical simulation of two-dimensional plasma flow in channels. 
Computer Methods in Applied Mechanics and Engineering Vol. 6, pp. 293--308.
North-Holland Publishing Company.





\bibitem{Cotter-etal-2019}
Cotter, C.J.,  Crisan, D., Holm, D.D.,  Pan, W. and Shevchenko, I., 2019.
Numerically Modelling Stochastic Lie Transport in Fluid Dynamics,
SIAM Multiscale Model. Simul., 17(1), 192--232.\\
\url{https://doi.org/10.1137/18M1167929}

\bibitem{Cotter-etal-2020}
Cotter, C.J.,  Crisan, D., Holm, D.D.,  Pan, W. and Shevchenko, I., 2020.
Data Assimilation for a Quasi-Geostrophic Model with Circulation-Preserving Stochastic Transport Noise. 
\\J Stat Phys 179, 1186-1221. 
\url{https://doi.org/10.1007/s10955-020-02524-0}

\bibitem{Crisan-etal-2023a}
Crisan, D., Holm, D.D., Lang, O., Mensah, P.R. and Pan, W., 2023. 
Theoretical analysis and numerical approximation for the stochastic thermal quasi-geostrophic model. 
Stochastics and Dynamics, 23(05), p.2350039.
\url{https://doi.org/10.1142/S0219493723500399}

\bibitem{CrisanHolmLeahyNilssen2022}
Crisan, D., Holm, D.D., Leahy, J.M. and Nilssen, T., 2022. 
Variational principles for fluid dynamics on rough paths.
Advances in Mathematics, 404, p.108409.
\url{https://doi.org/10.1016/j.aim.2022.108409}


\bibitem{Crisan-etal-2023b}
Crisan, D., Holm, D.D., Luesink, E., Mensah, P. R. \& Pan, W., 2023. 
J. Nonlinear Sci. 33, \#96. \url{https://doi.org/10.1007/s00332-023-09943-9}

\bibitem{CHLN2022a}
Crisan, D., Holm, D.D., Leahy, J.M. and Nilssen, T., 2022. 
Variational principles for fluid dynamics on rough paths. 
Advances in Mathematics, 404, p.108409. 
\\
\url{https://doi.org/10.1016/j.aim.2022.108409} (arXiv:2004.07829).

\bibitem{CHLN2022b}
Crisan, D., Holm, D.D., Leahy, J.M. and Nilssen, T., 2022. 
Solution properties of the incompressible Euler system with rough path advection
Journal of Functional Analysis, 283 (9) 109632\\
\url{https://doi.org/10.1016/j.jfa.2022.109632} (arXiv:2104.14933).

\bibitem{DHP2023}
Diamantakis, T., Holm, D.D. and Pavliotis, G.A., 2023. 
Variational principles on geometric rough paths and the L\'evy area correction. 
SIAM Journal on Applied Dynamical Systems, 22(2), pp.1182-1218.
\url{https://doi.org/10.1137/22M1522164}	




\bibitem{D+H2020}
Drivas, T.D. and Holm, D.D., 2019. 
Circulation and Energy Theorem Preserving Stochastic Fluids.
Proc. Roy. Soc. Edinburgh A: Mathematics, 150 (6) 2776-2814.
\url{https://doi.org/10.1017/prm.2019.43}

\bibitem{DHL2020}
Drivas, T.D., Holm, D.D. and Leahy, J.M., 2020. 
Lagrangian averaged stochastic advection by Lie transport for fluids. 
Journal of Statistical Physics, 179(5-6), pp.1304-1342.
\url{https://doi.org/10.1007/s10955-020-02493-4}

\bibitem{Eyink2009}
Eyink, G., 2009. Stochastic line motion and stochastic flux conservation for nonideal hydromagnetic models,
J. Math. Phys. 50, 083102. \url{https://doi.org/10.1063/1.3193681}

\bibitem{Eyink-etal2013}
Eyink, G., Vishniac, E., Lalescu, C., Aluie, H., Kanov, K., Bürger, K., Burns, R., Meneveau, C. and Szalay, A., 2013. Flux-freezing breakdown in high-conductivity magnetohydrodynamic turbulence. Nature, 497(7450), pp.466-469. \url{https://doi.org/10.1038/nature12128}



\bibitem{FGB_H2014}
Gay-Balmaz, F. and Holm, D.D., 2014. A
geometric theory of selective decay with applications in MHD. 
Nonlinearity, 27(8), p.1747. \url{https://doi.org/10.1088/0951-7715/27/8/1747}





\bibitem{GV2021}
Gilbert, A.D. and Vanneste, J., 2021.
A geometric look at MHD and the Braginsky dynamo.
Geophys. \& Astrophys. Fluid Dyn., 115 (4) pp. 436--471. 
\url{https://doi.org/10.1080/03091929.2020.1839896}



\bibitem{Goedbloed2019}
Goedbloed, J.P., Goedbloed, H., Keppens, R. and Poedts, S., 2019. 
Magnetohydrodynamics: of laboratory and astrophysical plasmas. Cambridge University Press.

\bibitem{Gombosi_SW_2021}
Gombosi, T.I., Chen, Y., Glocer, A., Huang, Z., Jia, X., Liemohn, M.W., Manchester, W.B., Pulkkinen, T., Sachdeva, N., Al Shidi, Q. and Sokolov, I.V., 2021. What sustained multi-disciplinary research can achieve: The space weather modeling framework. Journal of Space Weather and Space Climate, 11, p.42.
	\url{https://doi.org/10.1051/swsc/2021020}



\bibitem{HanHuLai2024}
Han, B., Hu, K. and Lai, N.A., 2024. 
On the global well-posedness for the compressible Hall-MHD system. 
Journal of Mathematical Physics, 65(1). \url{https://doi.org/10.1063/5.0175649}

\bibitem{HMRW1985} 
Holm, D.D., Marsden, J.E., Ratiu, T.S. and  Weinstein, A. 1985. 
Nonlinear Stability of Fluid and Plasma Equilibria. 
\textit{ Physics Reports} \textbf{123} 1--116.\\
\url{https://doi.org/10.1016/0370-1573(85)90028-6}

\bibitem{Holm1985}
Holm, D.D., 1985. Hamiltonian Structure for 
Alfv\'en Wave Turbulence Equations.
\textit{ Phys. Lett. A} \textbf{108} (1985) 445--447.\\
\url{https://doi.org/10.1016/0375-9601(85)90035-0}

\bibitem{HHM1985} 
Hazeltine, R.D., Holm, D.D.  and Morrison, P.J., 1985.
Electromagnetic Solitary Waves in Magnetized Plasmas.
\textit{ J. Plasma Phys.} \textbf{34} 103--114.\\
\url{https://doi.org/10.1017/S0022377800002713}

\bibitem{Holm1987}
Holm, D.D., 1987.
Hall Magnetohydrodynamics: Conservation Laws and Lyapunov Stability,
\textit{ Phys.  Fluids} \textbf{30} (1987) 1310--1322.
\url{https://doi.org/10.1063/1.866246}

\bibitem{Holm2015}  
Holm, D.D., 2015.  Variational Principles for Stochastic Fluid Dynamics,  
\textit{Proc Roy Soc A}, 471: 20140963.   arXiv:1410.8311
\url{http://dx.doi.org/10.1098/rspa.2014.0963}

\bibitem{Holm2019}
Holm, D.D., 2019. Stochastic closures for wave-current interaction dynamics. Journal of
Nonlinear Science, 29 , 2987-3031. \url{https://doi.org/10.1007/s00332-019-09565-0}

\bibitem{Holm2024}
Holm, D.D., 2024. 
Transport Noise in Fluid Dynamics.
\textit{Institute of Mathematical Statistics Bulletin}
IMS Bulletin 53 (2), pp14--15, March 2024 issue. \\
\url{https://imstat.org/2024/02/15/youngstats-transport-noise-in-fluid-dynamics/}

\bibitem{HH2021}
Holm, D. D. \& Hu, R., 2021. Stochastic effects of waves on currents in the ocean mixed layer. Journal of Mathematical Physics 62 (7), 073102. \url{https://doi.org/10.1063/5.0045010}

\bibitem{HHS2022a}
Holm, D. D., Hu, R., \& Street, O. D., 2023. Coupling of Waves to Sea Surface Currents Via Horizontal Density Gradients. In: Chapron, B. et al. (eds) Stochastic Transport in Upper Ocean Dynamics. STUOD 2021. Mathematics of Planet Earth, vol 10. Springer, Cham. \url{https://doi.org/10.1007/978-3-031-18988-3_8}

\bibitem{HHS2022b}
Holm, D. D., Hu, R., \& Street, O. D., 2023. Lagrangian reduction and wave mean flow interaction. Physica D: Nonlinear Phenomena, 133847. \url{https://doi.org/10.1016/j.physd.2023.133847}


\bibitem{HolmKim1991}
Holm, D.D. and Kimura, Y., 1991. 
Zero-helicity Lagrangian Kinematics in Three-Dimensional Advection, 
Phys. Fluids A 3 (1991) 1033?1038. \url{https://doi.org/10.1063/1.858083}

\bibitem{HK1987} 
Holm, D.D. and Kupershmidt,  B.A., 1987.
Superfluid Plasmas: Multivelocity Nonlinear Hydrodynamics of Superfluid Solutions with Charged Condensates Coupled Electromagnetically, \\
\textit{ Phys. Rev. A} \textbf{36} (1987) 3947--3956.
\url{https://doi.org/10.1103/PhysRevA.36.3947}





\bibitem{HMR1998}
Holm, D.D., Marsden, J.E. and Ratiu, T.S., 1998. 
The Euler-Poincar\'e equations and semidirect products with applications to continuum theories. 
Advances in Mathematics, 137(1), pp.1-81. \\
\url{https://doi.org/10.1006/aima.1998.1721}




\bibitem{Huba2023}
Huba, J.D., 2023. Hall Magnetohydrodynamics. In: B\"uchner, J. (eds) Space and Astrophysical Plasma Simulation. Springer, Cham. \url{https://doi.org/10.1007/978-3-031-11870-8_2}

\bibitem{RHK1991}
Kraichnan, R.H., 1991.
Stochastic modeling of isotropic turbulence,
 In \textit{New Perspectives in Turbulence} - Springer.

\bibitem{SC2021}
Street, O. D., \& Crisan D., 2021. Semi-martingale driven variational principles. Proc. R. Soc. A.477 20200957 \url{http://doi.org/10.1098/rspa.2020.0957}

\bibitem{ST2023}
Street, O. D., \& Takao, S., 2023. Semimartingale driven mechanics and reduction by symmetry for stochastic and dissipative dynamical systems. Preprint. Under peer review. \url{https://arxiv.org/abs/2312.09769}

\bibitem{Troutman2012}
Troutman, J.L., 2012. Variational calculus and optimal control: optimization with elementary convexity. Springer Science \& Business Media.
 
 


\end{thebibliography}
\end{document}